\documentclass[conference, 10pt]{IEEEtran}
\usepackage{setspace}
\usepackage{amsmath}
\usepackage{amssymb}
\usepackage{mathrsfs}
\usepackage{amsthm}
\usepackage{multirow}
\usepackage{color}
\usepackage{array}
\usepackage{booktabs}
\usepackage{float}
\usepackage{url}
\usepackage{comment}
\usepackage{enumerate}
\usepackage{graphicx}
\usepackage{enumitem}
\usepackage{tabularx}
\usepackage{makecell}
\usepackage[capitalize]{cleveref}
\usepackage{fixfoot}
\usepackage{caption}
\renewcommand{\thetable}{\arabic{table}}

\usepackage[style=ieee]{biblatex}
\usepackage{algorithm}
\usepackage[noend]{algorithmic}
\usepackage[table]{xcolor}
\usepackage{blkarray}
\usepackage{balance}

\usepackage{stmaryrd}

\IEEEoverridecommandlockouts

\theoremstyle{plain}
\newtheorem{theorem}{Theorem}
\newtheorem{corollary}[theorem]{Corollary}
\newtheorem{lemma}[theorem]{Lemma}
\newtheorem{proposition}[theorem]{Proposition}

\theoremstyle{definition}
\newtheorem{definition}{Definition}
\newtheorem{example}[definition]{Example}

\newtheorem{remark}[definition]{Remark}

\newcommand{\RM}{{\text{\small $\mathbb{RM}$}}}

\newcommand{\C}{\mathtt{C}}

\newcommand{\numberset}{\mathbb}

\newcommand{\F}{\numberset{F}}
\newcommand{\supp}{\textup{supp}}

\newcommand{\mC}{\mathcal{C}}
\newcommand{\mU}{\mathcal{U}}
\newcommand{\mW}{\mathcal{W}}
\newcommand{\mP}{\mathcal{P}}
\newcommand{\mR}{\mathcal{R}}

\newcommand{\mD}{\mathcal{D}}

\newcommand{\ps}{\oplus_\textnormal{P}}

\DeclareMathOperator{\wtH}{wt_H}

\usepackage[style=ieee]{biblatex}
\DeclareMathAlphabet{\mathbfsl}{OT1}{ppl}{b}{it} 

\usepackage{balance}

\newcolumntype{Y}{>{\centering\arraybackslash}X} 

\title{{\huge{Linear Code Conversion in the Merge Regime: General Bounds and Reed--Muller Constructions}
}\vspace{-1ex}}
\author{
    Anina Gruica\IEEEauthorrefmark{1}, 
    Benjamin Jany\IEEEauthorrefmark{2}, 
    and Stanislav Kruglik\IEEEauthorrefmark{3}%
    
    \thanks{\IEEEauthorrefmark{1}Anina Gruica is with the Department of Applied Mathematics and Computer Science, Technical University of Denmark, Kongens Lyngby, Denmark (email: anigr@dtu.dk).}
    \thanks{\IEEEauthorrefmark{2}Benjamin Jany is with the Department of Mathematics and Computer Science, Eindhoven University of Technology, Eindhoven, Netherlands (email: b.jany@tue.nl).}
    \thanks{\IEEEauthorrefmark{3}Stanislav Kruglik is with the Department of Electrical and Photonics Engineering, Technical University of Denmark, Kongens Lyngby, Denmark (email: stakr@dtu.dk).}
    \thanks{Parts of this work have been accepted for presentation at 2026 IEEE International Symposium on Information Theory~\cite{gruica2026convertible}.}\thanks{The work of Anina Gruica was supported by the Villum Fonden under Grant VIL52303 and by the EuroTech Visiting Researcher Program 2025. The work of Benjamin Jany was supported by the Casimir Institute at the Eindhoven University of Technology and by the EuroTech Visiting Researcher Program 2025 and 2026.}
    \thanks{\textit{Corresponding author:} Stanislav Kruglik}
    }

\begin{document}
\date{}


\maketitle


\begin{abstract}
Erasure codes are a core component of most existing large-scale distributed storage systems, ensuring reliability against node failures. Recent work has shown that adapting code parameters to changing node failure rates can lead to significant storage savings. The default approach is to re-encode the data under a new code, which consumes substantial system resources. Code conversion was introduced to reduce this cost. However, existing work has mainly focused on conversions within specific classes of codes. In this paper, we study scalar linear code conversion in the merge regime for arbitrary linear codes. We derive universal lower bounds on the write and read costs in terms of unchanged and read symbols. The bounds are refined using generalized Hamming weights, which capture support-growth properties of subcodes and can give sharper estimates than minimum-distance-only arguments. We show that the framework recovers known bounds for important special cases and can be strictly stronger when the final code has nontrivial jumps in its generalized Hamming weight hierarchy. We then apply the framework to Reed--Muller codes and construct explicit Reed--Muller convertible codes using the Plotkin decomposition. For a natural Reed--Muller parameter regime, the construction attains the derived write-cost lower bound. For the read cost, the generalized-Hamming-weight analysis is sharp for one initial block, while a gap remains for the other block. 

\end{abstract}

\begin{IEEEkeywords}
Codes for Distributed Storage, Algebraic Coding, Code conversion, Reed--Muller codes
\end{IEEEkeywords}

\section{Introduction}

The exponential growth of data stored in large-scale cloud and distributed storage systems, further accelerated by recent AI workloads, has made storage-node failures a routine event rather than an exception. While simple replication provides fault tolerance, it incurs a substantial storage overhead. Erasure coding offers a more storage-efficient alternative and is therefore a core component of modern distributed storage systems. In a typical erasure-coded system, a file is divided into $k$ data symbols, encoded into $n$ coded symbols by means of a linear code, and distributed across $n$ storage nodes. The choice of the code parameters is usually guided by the desired tradeoff between storage overhead, reliability, repair efficiency, and access performance~\cite{balaji2018erasure, shen2025survey}.

Different families of codes optimize different aspects of this tradeoff. Maximum distance separable (MDS) codes provide the largest possible tolerance against node failures for a given redundancy~\cite{roth2006introduction}. Regenerating codes, in comparison, optimize the amount of data transferred during node repair, while codes with locality, or locally repairable codes (LRCs), optimize the number of nodes involved in the repair of a single failed node~\cite{gopalan2012locality}. Reed--Muller codes provide another example of algebraic codes whose relevance to distributed storage systems has recently been highlighted through the
service-rate-region framework, which captures heterogeneous request patterns under
node-capacity constraints~\cite{ly2025servicelong,ly2025maximal}. 
Another family of codes used in storage systems is formed by LRCs with availability, which have multiple repair sets for each symbol~\cite{kruglik2017bounds}, as well as codes with low skip cost~\cite{zhang2025zigzag} and codes with optimal update and access properties~\cite{chen2020enabling}. However, most such constructions are designed for a fixed set of code parameters, corresponding to a fixed target failure rate and a fixed system workload. At the same time, storage-device failure rates and optimization targets may vary over time. It has been shown that adapting the code rate in response to such variations can lead to significant savings in storage space and operational cost~\cite{kadekodi2019cluster}. However, modifying the above-mentioned codes in a distributed storage system typically requires re-encoding the stored data and, in the case of non-systematic codes, decoding all message symbols. Such procedures involve accessing and reading a large number of stored symbols, which can be costly and may outweigh the benefits of conversion~\cite{maturana2022convertible}.

To address this issue, Maturana and Rashmi introduced the framework of \emph{convertible codes}, which formalizes the problem of converting data encoded under one code into data encoded under another code while minimizing the conversion cost~\cite{maturana2022convertible}. Code conversion has mainly been studied in two regimes: the \emph{merge regime}, where several initial codewords are merged into a smaller number of final codewords, and the opposite \emph{split regime}. The efficiency of a conversion procedure is typically measured either by the \emph{access cost}, namely the number of code symbols read and written during conversion~\cite{maturana2022convertible}, or by the \emph{bandwidth cost}, namely the amount of information transferred during conversion~\cite{maturana2023bandwidth}.

A substantial body of work has developed the theory of convertible codes for important code families used in distributed storage. For MDS codes, fundamental lower bounds and optimal constructions are known in the merge regime, with recent work further improving field-size requirements and bandwidth-efficient vector-code constructions~\cite{maturana2020access, maturana2023bandwidth, ramkumar2025mds, chopra2024low}. Code conversion for locally repairable codes has also attracted increasing attention~\cite{shi2025bounds, ge2025locally, kong2024locally}. Recent papers further study generalized merge-convertible codes, improved lower bounds, and optimal constructions for conversion into LRCs. These advances provide a detailed understanding of code conversion for MDS codes, LRCs, and several of their generalizations.

Despite this progress, the existing literature remains largely focused on conversions within specific code families, especially MDS codes and codes with locality. Much less is known about code conversion for arbitrary linear codes. In particular, there is no general framework that gives universal bounds on the access cost of converting between linear codes in the merge regime. Such bounds are important both theoretically and practically: they provide a baseline against which constructions for specific code families can be evaluated, and they identify which parts of the conversion cost are forced by the structure of the initial and final codes. They also cover the largely unexplored setting of conversions between different code families.

The main contribution of this paper is a universal access-cost framework for scalar linear code conversion in the merge regime. Existing lower bounds for convertible codes are largely tailored to MDS codes, locally repairable codes, or related structured families. In contrast, our bounds apply to arbitrary linear initial and final codes and therefore separate the part of the conversion cost that is forced by linear-algebraic constraints from the part that depends on a specific code family. The main technical point is that unchanged coordinates and read coordinates impose support constraints on subcodes of the final code. Minimum-distance arguments capture only one-dimensional subcodes, whereas generalized Hamming weights capture the support growth of higher-dimensional subcodes. This leads to refined bounds that recover known MDS and LRC conversion bounds in important special cases and can be strictly stronger when the final code has nontrivial jumps in its generalized Hamming weight hierarchy. As an application, we introduce Reed--Muller convertible codes in the merge regime. Reed--Muller codes have recently been connected to distributed storage systems through service-rate and heterogeneous-request models, but they have not previously been studied from the viewpoint of code conversion. Our construction exploits the Plotkin decomposition of Reed--Muller codes and, for a natural parameter regime, attains the derived write-cost lower bound. For the read cost, the generalized-Hamming-weight analysis is sharp for one initial block and gives a substantially stronger lower bound for the other, leaving a concrete read-optimality question for future work. Compared with our conference paper~\cite{gruica2026convertible}, this journal version provides a refined access-cost analysis based on generalized Hamming weights. We formally show that the resulting bounds recover known bounds in important special cases and can be strictly stronger for codes with nontrivial generalized Hamming weight hierarchies. We also expand the comparison between the Reed--Muller conversion construction and the derived bounds, and provide a more detailed treatment of the proofs, preliminaries, and motivation.

The rest of the paper is organized as follows. Section~\ref{sec2} recalls the required preliminaries on linear codes, convertible codes, and Reed--Muller codes. Section~\ref{sec:gen} develops general lower bounds on the access cost of linear code conversion in the merge regime. Section~\ref{sec:RM} introduces the Reed--Muller conversion construction and compares its performance with the derived bounds. Section~\ref{sec:concl} concludes the paper.


\section{Convertible Codes in the Merge Regime}\label{sec2}

\subsection{Preliminaries}

In this paper, we let $q$ be a prime power and we denote by $\F_q$ the finite field of $q$ elements. Moreover, unless specified otherwise, we let $k$ and $n$ be integers with $1\le k \le n$. We denote the set $\{1,\dots, n\}$ by $[n]$.

\begin{definition}
    A \textbf{linear code} $\mC$ of \textbf{dimension} $k$ is an $\F_q$-linear subspace of $\F_q^n$ of dimension~$k$. The \textbf{minimum distance} of~$\mC$ is $d(\mC) := \min\{\wtH(x) : x \in \mC, x \ne 0\}$,
    where $\wtH(x):= |\supp(x)|$ is the \textbf{Hamming weight} of $x=(x_1,\dots, x_n) \in \F_q^n$ and $\supp(x) := \{i \in [n] :  x_i \ne 0\}$ its (Hamming) support. We say that a code $\mC$ in $\F_q^n$ of dimension~$k$ is an $[n,k]_q$-code. If additionally $\mC$ has minimum distance $d$, then we say it is an $[n,k,d]_q$-code. Finally we define $\supp(\C) := \bigcup_{c \in \C} \supp(c)$ to be the \textbf{support} of $\C$, and $\C$ is \textbf{non-degenerate} if $\supp(\C) = [n]$. Throughout the paper we assume that codes are non-degenerate.
\end{definition}


In this paper, we study the \textit{code conversion problem} in the \textit{merge regime} following the theoretical framework introduced in~\cite{maturana2022convertible}, and the generalization thereof from~\cite{ge2024mds}. From now on, we let $\lambda > 1$ be an integer and, without loss of generality, consider the merge of several codes into one. Moreover, we let $1 \le k_F \le n_F$, $k_\textbf{I}:=(k_{I_1},\dots,k_{I_\lambda})$ and $n_\textbf{I}:=(n_{I_1},\dots,n_{I_\lambda})$,  $1 \le k_{I_i} \le n_{I_i}$ for $i \in [\lambda]$ to be integers. We also assume that  $\sum_{i=1}^{\lambda} k_{I_i}=k_F$. We define a \textit{convertible code} formally as follows.

\begin{definition}[\textnormal{see~\cite[Definition 3]{ge2024mds}}]
A \textbf{convertible code} in the merge regime with parameters $(n_\textbf{I},k_\textbf{I},n_F,k_F)$ consists of
    \begin{itemize}
        \item[(i)] $\lambda$ \textbf{initial codes} $\{\mC_{I_i}\}_{i \in [\lambda]}$, where $\mC_{I_i}$ is an $[n_{I_i}, k_{I_i}]_q$ code for all $i \in [\lambda]$;
        \item[(ii)] a \textbf{final code} $\mC_F$ with parameters $[n_F,k_F]_q$;
        \item[(iii)] a \textbf{conversion procedure (map)} $\sigma: \prod_{i=1}^{\lambda} \mC_{I_i} \longrightarrow \mC_F$.  
    \end{itemize}
We let $\mC_\textbf{I} :=\prod_{i=1}^{\lambda}\mC_{I_i}$, where elements of $\mC_{\textbf{I}}$ are tuples $(c_1,\dots,c_{\lambda})$ with $c_i \in \mC_{I_i}$ for $i \in [\lambda]$. We denote the convertible code that merges the initial codes $\{\mC_{I_i}\}_{i \in [\lambda]}$ into the final code $\mC_F$ using the conversion procedure $\sigma$, by $(\mC_\textbf{I},\mC_F,\sigma)$. Throughout the paper, unless explicitly stated otherwise, we consider linear information-preserving conversion maps. 
\end{definition}
The main objective in the theory of convertible codes is to merge several \emph{small} codes used in a distributed storage system into a larger code without fully decoding them. There are two main performance metrics, namely access cost and bandwidth. Access cost corresponds to the total number of disks accessed during conversion, while bandwidth reflects the total number of symbols transferred across the network during conversion. To optimize the former, it is sufficient to consider scalar codes, whereas optimizing the latter may require moving to vector codes, since partial downloads can be beneficial for bandwidth savings. In this paper, to sharpen the focus, we consider only the access-cost metric and leave the bandwidth extension as an interesting open problem. 

To better understand the access cost of a conversion procedure, we classify its symbols into three categories. 

\begin{definition}[\textnormal{see~\cite[Section III]{maturana2022convertible}}] \label{def:clasym}
Let $(\mC_\textbf{I},\mC_F,\sigma)$ be a convertible code with parameters $(n_\textbf{I},k_\textbf{I},n_F,k_F)$. Then 
\begin{itemize}
    \item[(i)] for $i \in [\lambda]$, the \textbf{unchanged symbols} corresponding to $\mC_{I_i}$, denoted by $\mU_{i}$, are symbols coming from $\mC_{I_i}$ that remain unchanged under~$\sigma$ (possibly in different locations), and the unchanged symbols from $\mC_{\textbf{I}}$ are $\mU := \bigcup_{i \in [\lambda]} \mU_i$;
    \item[(ii)]  \textbf{new symbols} in $\mC_F$, denoted by $\mW$, are symbols in $\mC_F$ that are not inherited from an unchanged symbol in $\mC_\textbf{I}$;
    \item[(iii)] for $i \in [\lambda]$ the \textbf{read symbols} from $\mC_{I_i}$, denoted by $\mR_{i}$, are symbols from $\mC_{I_i}$ that are used to determine a new symbol, and the read symbols from $\mC_{\textbf{I}}$ are $\mR := \bigcup_{i \in [\lambda]}\mR_i$.

\end{itemize}
Clearly, we have the following relation between the number of new and unchanged symbols $|\mW| = n_F - |\mU|$. We note that the sets of unchanged symbols and read symbols are not required to be disjoint. A symbol may be carried over unchanged into the final codeword and also be read in order to compute one or more new symbols. In this case, the symbol is counted both as an unchanged symbol and as a read symbol. The sets of unchanged and read symbols are defined with respect to codeword coordinates, not with respect to a particular generator-matrix representation.
\end{definition}

We can now formally define the access cost.
\begin{definition}[\textnormal{see~\cite[Definition 5]{maturana2022convertible}}]
Given a convertible code $(\mC_\textbf{I},\mC_F,\sigma)$, the \textbf{read cost} of the conversion procedure counts the number of symbols that are read during the procedure, i.e., $|\mR|$.
The \textbf{write cost} of the conversion procedure counts the number of new symbols in the final code that must be written, i.e., $|\mW|$.
The \textbf{access cost} of the conversion procedure is the sum of the read and write costs.
\end{definition}

\begin{remark} \label{rem:default}
Let $(\mC_\textbf{I},\mC_F,\sigma)$ be a convertible code with parameters $(n_\textbf{I},k_\textbf{I},n_F,k_F)$. The default conversion procedure reads $k_{I_i}$ code symbols from each initial code $\mC_{I_i}$, respectively, decodes the information, leaves these read symbols unchanged, and writes the new redundancy symbols (there are $n_F-k_F$ of them) to achieve the desired parameters of $\mC_F$. This results in an access cost of $n_F-k_F+\sum_{i=1}^{\lambda}k_{I_i}=n_F-k_F+k_F=n_F$. In this paper, we aim to improve the latter by preserving as many symbols from the initial code as possible and reducing the number of newly generated redundancy symbols during code conversion. 
\end{remark}

In the general framework, no linearity assumption is imposed on the conversion procedure
from $\mathcal C_{\mathbf I}$ to $\mathcal C_F$. However, in order to derive bounds on the
read and write costs, we restrict attention in what follows to linear conversion procedures.
More precisely, we assume that $\sigma:\mathcal C_{\mathbf I}\to \mathcal C_F$ is an $\mathbb F_q$-linear map. We also assume throughout the paper that
$\sum_{i=1}^{\lambda} k_{I_i}=k_F$, so that, under the requirement that the conversion preserves the encoded information, $\sigma$ is an isomorphism between $\mathcal C_{\mathbf I}$ and $\mathcal C_F$. The study of nonlinear conversion procedures is left as an interesting direction for future work.
Equivalently, after choosing generator matrices for the initial and final codes, 
each final
code symbol is a linear combination of the initial code symbols. An unchanged symbol is a
final coordinate that is equal to one of the initial coordinates. A new symbol is a final
coordinate that is not unchanged, and a read symbol is an initial coordinate that is used in
the computation of at least one new symbol.

The following example shows how a carefully designed code conversion can improve the access cost.
\begin{example} \label{ex:conv}
   Let $\mC_{\mathbf{I}} := \mC_{I_1} \times \mC_{I_2} \subseteq \F_2^6$ and $\mC_F \subseteq \F_2^5$, where $\mC_{\mathbf{I}}$ and $\mC_F$ have respective generator matrices $G_{\mathbf{I}}$ and $G_F$ as defined below:
   \[
   G_{\mathbf{I}}
   :=
   \begin{bmatrix}
      1 & 0 & 1 & 0 & 0 & 0 \\
      0 & 1 & 1 & 0 & 0 & 0 \\
      0 & 0 & 0 & 1 & 0 & 1 \\
      0 & 0 & 0 & 0 & 1 & 1
   \end{bmatrix},
   \;
   G_F :=
   \begin{bmatrix}
      1 & 0 & 0 & 0 & 1 \\
      0 & 1 & 0 & 0 & 1 \\
      0 & 0 & 1 & 0 & 1 \\
      0 & 0 & 0 & 1 & 1
   \end{bmatrix}.
   \]
   An efficient conversion procedure $\sigma : \F_2^6 \rightarrow \F_2^5$ is as follows:
   \[
\sigma((x_1,x_2,x_3,x_4,x_5,x_6)) = (x_1, x_2, x_4,x_5,x_3+x_6).
   \]
   Note that $\sigma$ leaves $4$ symbols, $x_1, x_2, x_4, x_5$, unchanged and must write only the last symbol of the final code. Hence the write cost is $1$. In order to write the last symbol of the final code, the conversion procedure must read $x_3$ and $x_6$, hence the read cost is $2$. This gives a total access cost of $3$.
 By Remark~\ref{rem:default}, the default approach has an access cost of 5.
\end{example}

\section{General Bounds on Write and Read Cost} \label{sec:gen}
In this section, we fix the initial codes $\mC_{\textbf{I}}=\prod_{i=1}^{\lambda}\mC_{I_i}$, where $\mC_{I_i}$ is an $[n_{I_i},k_{I_i}]_q$-code, and the final $[n_F,k_F]_q$-code $\mC_F$. We further assume for $i\in [\lambda]$ that $\mC_{I_i}$ has minimum distance $d_{I_i}$, the dual of $\mC_{I_i}$, denoted by $\mC_{I_i}^\perp$, has minimum distance $d_{I_i}^\perp$, $\mC_F$ has minimum distance $d_F$, and the dual of $\mC_F$, denoted by $\mC_F^\perp$, has minimum distance $d_F^\perp$.

We recall the definitions of shortening and puncturing of a code, which we will repeatedly use in this paper.
\begin{definition} \label{short/punct}
    Let $\mC \subseteq \F_q^n$ be a code and $S \subseteq [n]$. 
    \begin{itemize}
        \item[(i)] The \textbf{puncturing} of $\mC$ onto $S$ is $\mC_{|S}:=\{ \pi_S(x): x \in \mC\}$ where $\pi_S:\F_q^n \to \F_q^{|S|}$ is the projection onto the coordinates indexed by $S$.
        \item[(ii)] Let $\mC(S):=\{x \in \mC : \supp(x) \subseteq S\}$. The \textbf{shortening} of $\mC$ by the set $S$ is the code $\pi_S(\mC(S))$.
    \end{itemize}
\end{definition}

Some of our results in this section are inspired by methods from~\cite{ge2025locally, shi2025bounds, kong2024locally}, among others. Unlike those approaches, 
our results are more general and rely only on classical parameters such as length, dimension, and (dual) distance.
The main tools in this section are generalized Hamming weights and Wei duality. These allow us to formulate the bounds in terms of standard code parameters while keeping the arguments uniform for arbitrary linear convertible codes. To the best of our knowledge, this is the first use of these tools within the context of convertible codes. This approach can be used to further refine bounds for other classes of convertible codes with known generalized Hamming weight hierarchies, such as codes with locality. However, to sharpen the focus, we consider bounds for general linear codes and leave these extensions as an interesting direction for future research. We will need the following preliminary definitions and results. 

\begin{definition}
Let $\mC\subseteq \F_q^n$ be a linear code of dimension $k$, and let $j\in [k]$. The \textbf{$j$-th generalized Hamming weight} of $\mC$ is $d_j(\mC):=\min\{ |\supp(\mD)| : \mD\le \mC,\ \dim(\mD)=j\}$. 
\end{definition}

We will use two standard facts about generalized Hamming weights.

\begin{lemma}\label{lem:ghw-increasing-rewrite}
Let $\mC\subseteq \F_q^n$ be a linear code of dimension $k$. Then $1\le d_1(\mC)<d_2(\mC)<\cdots<d_k(\mC)\le n$. In particular, for every $j\in [k]$, we have $d_j(\mC)\ge d_1(\mC)+j-1.$
\end{lemma}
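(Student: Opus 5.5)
The plan is to prove strict monotonicity $d_j(\mC)<d_{j+1}(\mC)$ for $1\le j<k$. The remaining claims then follow directly. The lower bound $d_1(\mC)\ge 1$ holds because a one-dimensional subcode contains a nonzero vector, whose support is nonempty. The upper bound $d_k(\mC)\le n$ holds because $\mC$ itself is the only $k$-dimensional subcode and its support lies in $[n]$. Since the $d_j$ are integers, strict increase gives $d_j\ge d_{j-1}+1$, and induction on $j$ yields $d_j(\mC)\ge d_1(\mC)+j-1$.

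For strict monotonicity, fix $j<k$ and a subcode $\mD\le\mC$ with $\dim\mD=j+1$ and $|\supp(\mD)|=d_{j+1}(\mC)$. The goal is to find a $j$-dimensional subcode $\mD'\le\mD$ whose support is strictly smaller than that of $\mD$; this gives $d_j(\mC)\le|\supp(\mD')|<|\supp(\mD)|=d_{j+1}(\mC)$.

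To construct $\mD'$, choose any coordinate $i\in\supp(\mD)$. The coordinate functional $x\mapsto x_i$ restricted to $\mD$ is a nonzero linear form, because some vector of $\mD$ has a nonzero entry at $i$. Its kernel $\mD':=\{x\in\mD: x_i=0\}$ therefore has dimension exactly $j$ by rank–nullity. Every vector of $\mD'$ vanishes at $i$, so $\supp(\mD')\subseteq\supp(\mD)\setminus\{i\}$, and the support has shrunk by at least one.

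No step is a genuine obstacle. The only point needing care is confirming that the kernel has dimension exactly $j$ and not $j+1$, which is exactly why $i$ must be taken inside $\supp(\mD)$.
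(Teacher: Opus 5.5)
Your proof is correct and has the same structure as the paper's: first strict monotonicity of the generalized Hamming weights, then induction to obtain $d_j(\mC)\ge d_1(\mC)+j-1$. The paper only cites strict monotonicity as standard, whereas you prove it with the usual argument: take the kernel of a coordinate functional at a point of $\supp(\mD)$ for an optimal $(j+1)$-dimensional subcode $\mD$, which fills in the step the paper leaves to the literature.
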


\begin{proof}
The strict monotonicity of generalized Hamming weights is standard. The displayed lower bound follows immediately by induction from $d_{t+1}(\mC)\ge d_t(\mC)+1$ for $t\in [k-1]$.
\end{proof}

\begin{theorem}[Wei duality] \cite[Thm. 3]{wei1991generalized}\label{thm:wei-duality-rewrite}
Let $\mC\subseteq \F_q^n$ be an $[n,k]_q$-code. Then $\{d_1(\mC),\dots,d_k(\mC)\}=[n]\setminus \{n+1-d_1(\mC^\perp),\dots,n+1-d_{n-k}(\mC^\perp)\}$. 
\end{theorem}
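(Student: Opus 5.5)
The statement to prove is Wei duality (Theorem~\ref{thm:wei-duality-rewrite}). Write $d_j:=d_j(\mC)$ and $d_j^\perp:=d_j(\mC^\perp)$. By Lemma~\ref{lem:ghw-increasing-rewrite}, the sets $A:=\{d_1,\dots,d_k\}$ and $B:=\{n+1-d_1^\perp,\dots,n+1-d^\perp_{n-k}\}$ are subsets of $[n]$ of sizes exactly $k$ and $n-k$. So it suffices to show that $A\cap B=\emptyset$; then $A\cup B=[n]$ by counting.

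Fix $r\in[k]$ and $s\in[n-k]$, and assume for contradiction that $d_r = n+1-d_s^\perp$. The tool is the standard relation between shortening and puncturing. For $S\subseteq[n]$,
\[
\dim \mC(S) = k - \dim \bigl(\mC^\perp\bigr)_{|[n]\setminus S}\cdot 0 + \dots
\]
is not quite the form to use. The precise form I will use is the identity $\dim\mC(S)=|S|-\operatorname{rank}$ relation, which is more conveniently written as
\[
\dim \mC(S) = k-(n-|S|)+\dim \mC^\perp([n]\setminus S).
\]
This holds because $\mC(S)^{\perp}$ restricted appropriately corresponds to $\mC^\perp+\langle e_i : i\notin S\rangle$. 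Equivalently, $\dim \mC(S)=\dim(\mC\cap V_S)$ with $V_S=\langle e_i: i\in S\rangle$, and $(\mC\cap V_S)^\perp=\mC^\perp+V_{[n]\setminus S}$, whose dimension is $(n-k)+(n-|S|)-\dim\mC^\perp([n]\setminus S)$.

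Next, I will derive the key consequence. Using the characterization $d_r=\min\{|S| : \dim\mC(S)\ge r\}$, take $S$ to be the support of an optimal $r$-dimensional subcode, so $|S|=d_r$ and $\dim\mC(S)\ge r$. The main step is to show that
\[
n - d_r + 1 \ne d_s^\perp
\]
for all $s$. I will argue via the standard lemma: for every $t$ with $1\le t\le n-d_r$, the quantity $n+1-d_r$ is not in $B$. Concretely, I plan to prove the following two claims:
\begin{enumerate}
\item if $s\le \dim\mC^\perp$-counts force $d^\perp_s\le n-d_r$, then no conflict arises;
\item $n+1-d_r\ne d^\perp_s$ for every $s$.
\end{enumerate}

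For the second claim, set $T\subseteq[n]$ with $|T|=n+1-d_r$ and suppose $\dim\mC^\perp(T)= s$ while $\dim\mC^\perp(T')<s$ for all smaller $T'$. Deleting one coordinate $i\in T$ lowers $\dim\mC^\perp$ by exactly one, by minimality and because deleting one coordinate lowers the dimension by at most one. Now apply the identity with $S=[n]\setminus(T\setminus\{i\})$, which has $|S|=d_r$. Tracking $\dim\mC(S)$ when $i$ is added to $[n]\setminus T$ shows that $\dim \mC([n]\setminus T)=\dim\mC(S)$. Hence $\mC$ already has an $r$-dimensional subcode supported on a set of size $d_r-1$, contradicting the minimality of $d_r$.

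I expect the main obstacle to be this bookkeeping of dimensions: verifying that the rank change on the dual side under adding or removing a single coordinate exactly offsets the $\pm1$ in the identity. The intermediate bound I will aim for first is that $\dim\mC(S)-\dim\mC(S\setminus\{i\})\in\{0,1\}$, together with the equality $\dim\mC(S)-\dim\mC(S\setminus\{i\})=1-\bigl(\dim\mC^\perp(\bar S\cup\{i\})-\dim\mC^\perp(\bar S)\bigr)$, where $\bar S=[n]\setminus S$. This equality follows directly from the identity above. It shows that exactly one of "$\mC$ grows when $i$ is added to $S\setminus\{i\}$" and "$\mC^\perp$ grows when $i$ is added to $\bar S$" occurs. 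Scanning $i$ along a maximal chain of coordinate sets then turns each $d_r$ into a step of $\mC$ and each $n+1-d_s^\perp$ into a step of $\mC^\perp$ on the complementary position. Since these two kinds of steps are mutually exclusive, $A\cap B=\emptyset$.
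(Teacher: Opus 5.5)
Your reduction to $A\cap B=\emptyset$ by counting is fine. The identity $\dim\mC(S)=k-(n-|S|)+\dim\mC^\perp([n]\setminus S)$ is correct and is the right tool; the paper gives no proof and only cites Wei, whose argument rests on this identity.

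The gap is at the end of your second claim. There $T$ is the support of an optimal $s$-dimensional subcode of $\mC^\perp$, and $S=([n]\setminus T)\cup\{i\}$ is determined by $T$. This $S$ is not the support of an optimal $r$-dimensional subcode of $\mC$, so nothing gives you $\dim\mC(S)\ge r$. Yet that is exactly what ``hence $\mC$ already has an $r$-dimensional subcode supported on a set of size $d_r-1$'' requires. In fact the identity gives directly $\dim\mC([n]\setminus T)\ge k-n+(n-d_s^\perp)+s=k+s-d_s^\perp$. So your argument reaches a contradiction only when $r\le k+s-d_s^\perp$, and this case can be empty: for MDS codes $d_s^\perp=k+s$, so $k+s-d_s^\perp=0$ and your argument never applies.

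The chain argument in your last paragraph has the same defect. Along one fixed maximal chain, the step positions of $\mC$ and of $\mC^\perp$ are indeed complementary. However, they only bound the generalized Hamming weights from above. A chain through an optimal support for $d_r$ need not, after complementation, pass through an optimal support for $d_s^\perp$. So exclusivity of steps along one chain says nothing about $d_r$ versus $n+1-d_s^\perp$.

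The missing half is the symmetric application of the identity on the $\mC$ side. Let $S_0$ be the support of an $r$-dimensional subcode of $\mC$ with $|S_0|=d_r=n+1-d_s^\perp$. Then
\[
\dim\mC^\perp([n]\setminus S_0)=\dim\mC(S_0)-k+(n-d_r)\ge r-k+d_s^\perp-1 .
\]
On the other hand, $|[n]\setminus S_0|=d_s^\perp-1<d_s^\perp$ forces $\dim\mC^\perp([n]\setminus S_0)\le s-1$. Hence $r\le k+s-d_s^\perp$. Now your $T$-computation gives $\dim\mC([n]\setminus T)\ge r$ on a set of size $n-d_s^\perp=d_r-1$, contradicting the minimality of $d_r$.

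Equivalently, as in Wei's proof: the $T$-argument shows $d_j\le n-d_s^\perp$ for $j\le k+s-d_s^\perp$, and the $S_0$-argument shows $d_j\ne n+1-d_s^\perp$ for $j>k+s-d_s^\perp$. With both halves the proof is complete. You do not need the single-coordinate deletion, the exact equality $\dim\mC^\perp(T)=s$, or your ``claim 1''. The garbled first display should be dropped.
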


\subsection{Bounds on the Write Cost}
\label{subsec:boundsunchanged}

In this subsection, we derive upper bounds on the number of unchanged symbols in a convertible code, and hence lower bounds on the write cost. We first bound the individual sets of unchanged symbols $\mU_i$, then derive a bound on the total number of unchanged symbols, and finally compare the obtained bounds.

\begin{proposition}\label{prop:ghw-unchanged-rewrite}
Let $(\mC_{\textbf{I}},\mC_F,\sigma)$ be a convertible code. For $i\in [\lambda]$, let $\mU_i$ be the unchanged symbols corresponding to $\mC_{I_i}$, and define
\begin{align*} k_{-i}:=\sum_{j\in [\lambda]\setminus\{i\}} k_{I_j}=k_F-k_{I_i}. \end{align*}
Moreover, define the subcode
\begin{align*} \widehat{\mC}_i:=\sigma(\mC_{I_1},\dots,\mC_{I_{i-1}},0,\mC_{I_{i+1}},\dots,\mC_{I_\lambda})\subseteq \mC_F. \end{align*}
Then
\begin{align*} |\mU_i|\le \min\{n_{I_i},\, n_F-d_{k_{-i}}(\widehat{\mC}_i)\}. \end{align*}
In particular,
\begin{align*} |\mU_i|\le \min\{n_{I_i},\, n_F-d_{k_{-i}}(\mC_F)\}. \end{align*}
\end{proposition}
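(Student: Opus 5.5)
The bound $|\mU_i|\le n_{I_i}$ is immediate, since $\mU_i$ consists of coordinates of $\mC_{I_i}$, which has length $n_{I_i}$. The content of the statement is the second term. The plan is to show that $\widehat{\mC}_i$ is a $k_{-i}$-dimensional subcode of $\mC_F$ whose support avoids the final positions occupied by $\mU_i$, and then to compare its support size with the generalized Hamming weight.

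First, I will establish the dimension. Since $\sigma$ is linear and information preserving, it is an isomorphism $\mC_{\textbf{I}}\to\mC_F$. The domain subspace $\mC_{I_1}\times\cdots\times\{0\}\times\cdots\times\mC_{I_\lambda}$ has dimension $\sum_{j\ne i}k_{I_j}=k_{-i}$. Hence $\dim\widehat{\mC}_i=k_{-i}$, and $\widehat{\mC}_i\le\mC_F$.

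Second, I will bound the support. Let $T_i\subseteq[n_F]$ be the set of final positions at which the unchanged symbols of $\mU_i$ are placed; these positions are distinct, so $|T_i|=|\mU_i|$. For $t\in T_i$, the final coordinate $\sigma(c)_t$ equals a fixed coordinate of $c_i$. So for any codeword $c$ with $c_i=0$, we get $\sigma(c)_t=0$. Therefore $\supp(\widehat{\mC}_i)\subseteq[n_F]\setminus T_i$, which gives $|\supp(\widehat{\mC}_i)|\le n_F-|\mU_i|$.

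Third, I will compare with the generalized Hamming weights. The code $\widehat{\mC}_i$ is itself a $k_{-i}$-dimensional subcode of itself, so $d_{k_{-i}}(\widehat{\mC}_i)=|\supp(\widehat{\mC}_i)|$. Combining with the previous step gives $|\mU_i|\le n_F-d_{k_{-i}}(\widehat{\mC}_i)$. For the "in particular" part, every $k_{-i}$-dimensional subspace of $\widehat{\mC}_i$ is a subspace of $\mC_F$, so $d_{k_{-i}}(\mC_F)\le d_{k_{-i}}(\widehat{\mC}_i)$. This yields the weaker bound in terms of $\mC_F$.

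There is one edge case: $k_{-i}\ge1$ holds because $\lambda>1$ and each $k_{I_j}\ge1$, so $d_{k_{-i}}$ is defined. The only step needing care is the second one. There I must use that an unchanged symbol is a final coordinate identically equal to one specific initial coordinate of block $i$ for all codewords, and that these occupy $|\mU_i|$ distinct final positions. I will make this explicit from Definition~\ref{def:clasym}.
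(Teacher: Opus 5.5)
Your proposal is correct and follows essentially the same argument as the paper: $\widehat{\mC}_i$ has dimension $k_{-i}$, its support avoids the final positions occupied by the symbols of $\mU_i$, hence $d_{k_{-i}}(\widehat{\mC}_i)\le n_F-|\mU_i|$, and the weaker bound follows from $\widehat{\mC}_i\le\mC_F$. The only cosmetic difference is that the paper works with the larger set $N$ of final coordinates that are unchanged from, or depend only on, block $i$. Since it only uses $\mU_i\subseteq N$, restricting to the unchanged positions as you do loses nothing.
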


\begin{proof}
Without loss of generality, assume $i=1$, and let $\widehat{\mC}:=\sigma(0,\mC_{I_2},\dots,\mC_{I_\lambda})\subseteq \mC_F.$
Clearly, $\dim(\widehat{\mC})=k_{-1}$. Let $N$ be the set of coordinates in the final codeword that are unchanged from $\mC_{I_1}$,
or depend only on the read symbols from $\mC_{I_1}$. Let $N^c=[n_F]\setminus N$. If $v\in \widehat{\mC}$, then the first initial code is zero in the definition of $\widehat{\mC}$, so $v|_N=0$. Thus $\supp(\widehat{\mC})\subseteq N^c$, and therefore $d_{k_{-1}}(\widehat{\mC})\le |N^c|=n_F-|N|.$ 
Equivalently,
$|N|\le n_F-d_{k_{-1}}(\widehat{\mC})$.
Since $\mU_1\subseteq N$, we obtain
$|\mU_1|\le |N|\le n_F-d_{k_{-1}}(\widehat{\mC}).$
Together with the trivial bound $|\mU_1|\le n_{I_1}$, this proves the first claim.

For the second claim, note that $\widehat{\mC}_i$ is a $(k_{-i})$-dimensional subcode of $\mC_F$, and hence
$d_{k_{-i}}(\mC_F)\le d_{k_{-i}}(\widehat{\mC}_i).$
Therefore,
$n_F-d_{k_{-i}}(\widehat{\mC}_i)\le n_F-d_{k_{-i}}(\mC_F),$
which proves the stated weaker bound depending only on $\mC_F$.
\end{proof}

\begin{corollary}\label{cor:unchanged1}
Let $(\mC_{\textbf{I}},\mC_F,\sigma)$ be a convertible code. For $i\in [\lambda]$, let $\mU_i$ be the unchanged symbols from $\mC_{I_i}$. Then
\begin{align*} |\mU_i|\le \min\{n_{I_i},\, n_F-d_F-\textstyle\sum_{j\ne i} k_{I_j}+1\}. \end{align*}
\end{corollary}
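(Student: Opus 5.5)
The plan is to deduce the corollary directly from the weaker bound in Proposition~\ref{prop:ghw-unchanged-rewrite}, namely
\[
|\mU_i|\le \min\{n_{I_i},\, n_F-d_{k_{-i}}(\mC_F)\},
\]
and then to bound the generalized Hamming weight $d_{k_{-i}}(\mC_F)$ from below using only the minimum distance. By Lemma~\ref{lem:ghw-increasing-rewrite}, applied to $\mC_F$ with $j=k_{-i}$, we have
\[
d_{k_{-i}}(\mC_F)\ge d_1(\mC_F)+k_{-i}-1=d_F+\textstyle\sum_{j\ne i}k_{I_j}-1,
\]
since $d_1(\mC_F)=d_F$ by definition (a one-dimensional subcode is spanned by a single nonzero codeword, and its support is that codeword's support). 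Substituting this gives $n_F-d_{k_{-i}}(\mC_F)\le n_F-d_F-\sum_{j\ne i}k_{I_j}+1$. Taking the minimum with the trivial bound $n_{I_i}$ then yields the claim.

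The only point requiring care is that the lemma must be applicable, that is, $k_{-i}\in[k_F]$. Since $\lambda>1$ and each $k_{I_j}\ge 1$, we get $k_{-i}\ge 1$. Also $k_{-i}=k_F-k_{I_i}\le k_F-1$. So the index is valid and no degenerate case arises.

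I do not expect a real obstacle here: the statement is a direct weakening of the proposition via the strict monotonicity of the generalized Hamming weights. The substantive work, namely the support argument on the subcode $\widehat{\mC}_i$, is already done in Proposition~\ref{prop:ghw-unchanged-rewrite}.
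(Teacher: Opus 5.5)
Your proposal is correct and follows the paper's proof exactly: you invoke the weaker, $\mC_F$-only bound of Proposition~\ref{prop:ghw-unchanged-rewrite} and lower-bound $d_{k_{-i}}(\mC_F)$ by $d_F+k_{-i}-1$ via Lemma~\ref{lem:ghw-increasing-rewrite}. Your check that $1\le k_{-i}\le k_F-1$, so that the index is valid, is a small point the paper leaves implicit.
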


\begin{proof}
By Proposition~\ref{prop:ghw-unchanged-rewrite} we have
\begin{align*} |\mU_i|\le \min\{n_{I_i},\, n_F-d_{k_{-i}}(\mC_F)\}. \end{align*}
By Lemma~\ref{lem:ghw-increasing-rewrite},
\begin{align*} d_{k_{-i}}(\mC_F)\ge d_1(\mC_F)+k_{-i}-1=d_F+k_{-i}-1. \end{align*}
Therefore we have
\begin{align*}
|\mU_i|&\le \min\{n_{I_i},\, n_F-(d_F+k_{-i}-1)\}\\&
=\min\{n_{I_i},\, n_F-d_F-k_{-i}+1\} \\
&=\min\{n_{I_i},\, n_F-d_F-\textstyle\sum_{j\ne i} k_{I_j}+1\},
\end{align*}
which proves the corollary.
\end{proof}

\begin{remark}\label{rem:write-ghw-improvement}
Fix $i\in [\lambda]$ and write $k_{-i}=\sum_{j\ne i}k_{I_j}$. Proposition~\ref{prop:ghw-unchanged-rewrite} gives
\begin{align*}
    |\mU_i|\le \min\{n_{I_i},\,n_F-d_{k_{-i}}(\mC_F)\},
\end{align*}
whereas Corollary~\ref{cor:unchanged1} gives the distance-based bound
\begin{align*}
    |\mU_i|\le \min\{n_{I_i},\,n_F-d_F-k_{-i}+1\}.
\end{align*}
Since generalized Hamming weights satisfy $d_{k_{-i}}(\mC_F)\ge d_F+k_{-i}-1$, the bound taking into account generalized Hamming weights is always at least as strong. Moreover (ignoring the minimum with $n_{I_i}$) it is strictly stronger if and only if
\begin{align*}
    d_{k_{-i}}(\mC_F)>d_F+k_{-i}-1.
\end{align*}
Equivalently, the generalized Hamming weights of $\mC_F$ grow faster than the minimal growth forced by strict monotonicity between the first and the $(k_{-i})$-th generalized Hamming weight. In particular, if one of the increments
\begin{align*}
    d_{t+1}(\mC_F)-d_t(\mC_F),
\end{align*}
for $1\le t<k_{-i}$, is greater than $1$, then the bound taking into account generalized Hamming weights gives an improvement over the bound obtained from the minimum distance alone.
\end{remark}

\begin{remark}\label{remark:unchanged-LRC}
Let us compare Proposition~\ref{prop:ghw-unchanged-rewrite} with the improved bound for
locally repairable convertible codes from~\cite{ge2025locally} in the special case of
optimal locally repairable codes. In this setting, closed-form estimates can be obtained
from known lower bounds on generalized Hamming weights~\cite{hao2020generalized}. For comparison, we use the following Singleton-type bound on the minimum
distance of codes with locality $r$ from~\cite{gopalan2012locality}: $$d \leq n-k+1-\left(\left\lceil \frac{k}{r}\right\rceil-1\right).$$ Assume
that $k_{I_i}=k$ for all \(i\in[\lambda]\), \(n_F=n\), $k_F=\lambda k$ and that all involved codes
have locality $r$. In this case, the bound from \cite{ge2025locally} takes the following form: $$|\mU_i|
 \leq
 k+\left\lceil \frac{\lambda k}{r}\right\rceil
 -\left\lceil \frac{(\lambda-1)k}{r}\right\rceil.$$ Employing the following lower bound on the
$j$-th generalized Hamming weight of an optimal code with locality $r$, length $n$ and dimension $k$ from~\cite[Theorem~4]{hao2020generalized}: 
$$d_j
\geq
n-k
-
\left\lceil
\frac{\left\lceil k/r\right\rceil r-j+1}{r}
\right\rceil
+j+1$$ our bounds take the following shape: $$|\mU_i|\leq
\left\lceil
\frac{
\left\lceil \lambda k/r\right\rceil r-(\lambda-1)k+1
}{r}
\right\rceil
+k-1.$$ Subtracting one from another,
and denoting $A=\left\lceil \frac{\lambda k}{r}\right\rceil$, $B=\frac{(\lambda-1)k}{r}$, we get $A-\lceil B\rceil
-
\lceil A-B+\frac{1}{r}\rceil
+1$. Since $A$ is an integer, we have $A-\lceil B\rceil=\lfloor A-B\rfloor$. Therefore the difference becomes $\left\lfloor A-B\right\rfloor + 1
-
\lceil A-B+\frac{1}{r}\rceil$. If $0 < \lceil A - B \rceil - (A-B) < 1/r$ then the expression is $-1$, otherwise the expression is zero.
However, defining $M=Ar-(\lambda-1)k$, we can show that $\lceil A - B \rceil - (A-B)=\lceil
\frac{M}{r}\rceil-\frac{M}{r}$ cannot lie in the open interval from $0$ to $1/r$, since $M$ is an integer. As a result, the expression becomes zero and, in this special case, our generalized-Hamming-weight bound is always as good as the bound from~\cite{ge2025locally}. 

However, the above comparison uses only a lower bound on the generalized Hamming
weights. Therefore, our estimate is sharp only when this lower bound is tight. If the final code has larger jumps in its generalized Hamming weight hierarchy than those guaranteed
by the general lower bound, then Proposition~\ref{prop:ghw-unchanged-rewrite} can be strictly stronger.
For example, consider the binary simplex code of dimension $3$ and length $7$. This code
has locality $2$. In this case, our bound gives $|\mU_i|\leq 1$, whereas the bound from~\cite{ge2025locally} gives only $|\mU_i|\leq 2$.
\end{remark}

A sharper (but more restrictive) bound is as follows. It also takes into account the dual distance of the final code.

\begin{corollary}\label{cor:unchanged2}
Let $(\mC_{\textbf{I}},\mC_F,\sigma)$ be a convertible code. Fix $i\in [\lambda]$, and let $\mU_i$ be the unchanged symbols from $\mC_{I_i}$. If $d_F^\perp>k_{I_i}+1$, then
\begin{align*} |\mU_i|\le k_{I_i}. \end{align*}
In particular, if $d_F^\perp>k_{I_i}+1$ for all $i\in [\lambda]$, then
\begin{align*} |\mU|\le \sum_{i=1}^{\lambda} k_{I_i}=k_F. \end{align*}
\end{corollary}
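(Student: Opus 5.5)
The plan is to combine Proposition~\ref{prop:ghw-unchanged-rewrite} with Wei duality (Theorem~\ref{thm:wei-duality-rewrite}). We pin down the top part of the generalized Hamming weight hierarchy of $\mC_F$ exactly, and in particular the value $d_{k_{-i}}(\mC_F)$.

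By Proposition~\ref{prop:ghw-unchanged-rewrite} we have $|\mU_i|\le n_F-d_{k_{-i}}(\mC_F)$, where $k_{-i}=k_F-k_{I_i}$. It therefore suffices to show that $d_{k_{-i}}(\mC_F)=n_F-k_{I_i}$ under the hypothesis $d_F^\perp>k_{I_i}+1$.

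\textbf{Step 1: locate the excluded values.} By Lemma~\ref{lem:ghw-increasing-rewrite} applied to $\mC_F^\perp$, we have $d_j(\mC_F^\perp)\ge d_1(\mC_F^\perp)=d_F^\perp\ge k_{I_i}+2$ for every $j\in[n_F-k_F]$. Hence every excluded value in Wei duality satisfies
\begin{align*}
n_F+1-d_j(\mC_F^\perp)\le n_F-k_{I_i}-1 .
\end{align*}
If $\mC_F^\perp=0$ (that is, $n_F=k_F$), there is nothing to exclude, and the same conclusion below holds trivially.

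\textbf{Step 2: read off the top of the hierarchy of $\mC_F$.} By Theorem~\ref{thm:wei-duality-rewrite}, the $k_F$ integers $d_1(\mC_F)<\dots<d_{k_F}(\mC_F)$ are exactly the elements of $[n_F]$ not excluded in Step 1. All $k_{I_i}+1$ integers $n_F-k_{I_i},\dots,n_F$ are therefore generalized Hamming weights of $\mC_F$. Since the weights are strictly increasing, these must be the largest $k_{I_i}+1$ of them, so
\begin{align*}
d_{k_F-t}(\mC_F)=n_F-t \quad \text{for } t=0,\dots,k_{I_i}.
\end{align*}
This requires $k_{I_i}+1\le k_F$, which holds because $\lambda>1$ and each $k_{I_j}\ge1$. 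Taking $t=k_{I_i}$ gives $d_{k_{-i}}(\mC_F)=n_F-k_{I_i}$.

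\textbf{Step 3: conclude.} Substituting into the bound from Proposition~\ref{prop:ghw-unchanged-rewrite} yields $|\mU_i|\le k_{I_i}$. The ``in particular'' statement follows by summing over $i$, since $|\mU|\le\sum_i|\mU_i|$ and $\sum_i k_{I_i}=k_F$.

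The only delicate point is the counting in Step 2. Under the weaker hypothesis $d_F^\perp\ge k_{I_i}+1$, only the $k_{I_i}$ values $n_F-k_{I_i}+1,\dots,n_F$ are guaranteed to be weights of $\mC_F$. That gives the exact value of $d_{k_{-i}+1}(\mC_F)$ but not of $d_{k_{-i}}(\mC_F)$. This is precisely why the strict inequality $d_F^\perp>k_{I_i}+1$ is needed, and the argument must use it to secure the extra value $n_F-k_{I_i}$.
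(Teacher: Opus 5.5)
Your proof is correct and follows essentially the same route as the paper. Both reduce via Proposition~\ref{prop:ghw-unchanged-rewrite} to bounding $d_{k_{-i}}(\mC_F)$, then use Wei duality to show that the $k_{I_i}+1$ integers $n_F-k_{I_i},\dots,n_F$ all lie in the weight hierarchy of $\mC_F$. Your counting step, which identifies these as the top $k_{I_i}+1$ weights and so gives the exact value $d_{k_{-i}}(\mC_F)=n_F-k_{I_i}$, makes explicit what the paper leaves implicit in ``it follows that.''
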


\begin{proof}
By Proposition~\ref{prop:ghw-unchanged-rewrite}, it suffices to show that $d_{k_{-i}}(\mC_F)\ge n_F-k_{I_i}.$
Since $d_F^\perp=d_1(\mC_F^\perp)>k_{I_i}+1$, we have for every $s\in [n_F-k_F]$,
\begin{align*} d_s(\mC_F^\perp)\ge d_1(\mC_F^\perp)>k_{I_i}+1. \end{align*}
Hence we have
$n_F+1-d_s(\mC_F^\perp) < n_F-k_{I_i}$ for every such $s$ and $\{n_F-k_{I_i},\, n_F-k_{I_i}+1,\,\dots,\, n_F\} \cap  \{n_F+1-d_1(\mC_F^\perp),\dots,n_F+1-d_{n_F-k_F}(\mC_F^\perp)\} = \emptyset$. Therefore, by Wei duality,  $\{n_F-k_{I_i},\, n_F-k_{I_i}+1,\,\dots,\, n_F\}  \subseteq \{d_1(\mC_F),\dots,d_{k_F}(\mC_F)\}.$ 
There are exactly $k_{I_i}+1$ such integers. Since $k_{-i}=k_F-k_{I_i}$, it follows that the $(k_F-k_{I_i})$-th generalized Hamming weight satisfies
$d_{k_{-i}}(\mC_F)\ge n_F-k_{I_i}.$ 
Consequently, $|\mU_i|\le n_F-d_{k_{-i}}(\mC_F)\le k_{I_i}.$
Summing over $i$ proves the final claim.
\end{proof}


The bound from Corollary~\ref{cor:unchanged2} applied to the case where the final code is MDS recovers~\cite[Lemma 11]{maturana2022convertible}.

\begin{corollary}\label{cor:finalMDSunchangup}
    Let $(\C_{\textbf{I}}, \C_F, \sigma)$ be a convertible code in which $\C_F$ is MDS. For $i \in [\lambda]$, let~$\mU_i$ be the unchanged symbols of $\C_{I_i}$. Then $\sum_{i=1}^{\lambda}|\mU_i| \leq k_F.$
\end{corollary}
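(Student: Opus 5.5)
The plan is to apply Corollary~\ref{cor:unchanged1} directly, using the fact that an MDS final code meets the Singleton bound with equality. Corollary~\ref{cor:unchanged2} is a possible alternative route. It needs $d_F^\perp = k_F+1 > k_{I_i}+1$, which holds because $\lambda>1$ and every $k_{I_j}\ge 1$. However, it requires a nonzero dual, so it breaks down when $k_F=n_F$. The distance-only bound avoids that case distinction, so I will use it.

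First, since $\C_F$ is an $[n_F,k_F]_q$ MDS code, $d_F = n_F-k_F+1$. Fix $i\in[\lambda]$. Corollary~\ref{cor:unchanged1} gives
\begin{align*}
|\mU_i| \le n_F - d_F - \textstyle\sum_{j\ne i} k_{I_j} + 1 = n_F-(n_F-k_F+1) - (k_F - k_{I_i}) + 1 = k_{I_i}.
\end{align*}
The simplification uses $\sum_{j\ne i}k_{I_j} = k_F-k_{I_i}$, which is the standing assumption $\sum_{j=1}^{\lambda} k_{I_j}=k_F$.

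Next, I sum the per-block inequality over $i\in[\lambda]$. This gives $\sum_{i=1}^\lambda |\mU_i| \le \sum_{i=1}^\lambda k_{I_i} = k_F$, which is the claim.

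There is no real obstacle. The only point to check is that the sets $\mU_i$ are indexed by distinct initial blocks, so summing the bounds counts each block once. Alternatively, the same per-block bound follows from Proposition~\ref{prop:ghw-unchanged-rewrite} and the known MDS hierarchy $d_j(\C_F)=n_F-k_F+j$. That hierarchy is itself forced by Lemma~\ref{lem:ghw-increasing-rewrite} together with $d_{k_F}(\C_F)\le n_F$. This gives a consistency check that the two routes agree.
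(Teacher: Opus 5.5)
Your proof is correct, but it takes a different route from the paper's. The paper invokes Corollary~\ref{cor:unchanged2}. An MDS code has an MDS dual, so $d_F^\perp=k_F+1>k_{I_i}+1$, and the dual-distance/Wei-duality argument then gives $|\mU_i|\le k_{I_i}$ for each $i$.

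You instead substitute $d_F=n_F-k_F+1$ into the distance-only bound of Corollary~\ref{cor:unchanged1}. That needs only the strict monotonicity of generalized Hamming weights (Lemma~\ref{lem:ghw-increasing-rewrite}) and no dual code or Wei duality, so it is the more elementary argument.

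The trade-off is generality. Write $\rho_F=n_F-k_F-d_F+1$ for the Singleton defect; then Corollary~\ref{cor:unchanged1} reads $|\mU_i|\le\min\{n_{I_i},\,k_{I_i}+\rho_F\}$. Your route therefore reaches $k_{I_i}$ precisely because $\rho_F=0$. It cannot do so for any non-MDS final code unless $n_{I_i}=k_{I_i}$. The paper's route instead presents the corollary as one instance of a criterion, $d_F^\perp>k_{I_i}+1$, that also covers non-MDS final codes. An example is $\mC_F=\RM(m-2,m)$ with respect to $\mC_{I_2}$ in Section~\ref{sec:comp}, where the distance-only bound gives only $2^{m-1}-2$.

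Your explicit check that $k_F>k_{I_i}$ (because $\lambda>1$ and each $k_{I_j}\ge 1$) is a detail the paper leaves implicit. Your worry about $k_F=n_F$ is a matter of convention rather than a real failure of the dual route. With the usual convention that the zero code has minimum distance $n_F+1$ (or $+\infty$), the hypothesis of Corollary~\ref{cor:unchanged2} holds and its Wei-duality step goes through vacuously. Still, sidestepping that convention is a modest advantage of your approach.
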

\begin{proof}
If $\C_F$ is MDS, then its dual $\mC_F^\perp$ is MDS as well, and so $d^{\perp}_F =n_F-(n_F-k_F)+1= k_F + 1 > k_{I_i} + 1$ for all $i \in [\lambda]$. Hence we can apply Corollary~\ref{cor:unchanged2} to obtain the statement of the corollary.
\end{proof}

In Corollary~\ref{cor:unchanged2}, it is not possible to omit the requirement that $d^{\perp}_F > k_{I_i} + 1$ to derive a non-trivial bound on the number of unchanged coordinates without involving additional parameters of the code. This is illustrated in the following example. 

\begin{example}   
 Let $\lambda=2$ and let
 \begin{align*}
 &\C_{I_1}= \C_{I_2} = \left\langle \begin{bmatrix} 1 & 1 \end{bmatrix} \right\rangle, \quad \C_{\textbf{I}}  = \C_{I_1} \times \C_{I_2},\\
 &\C_F = \left\langle \begin{bmatrix} 1 & 0 & 1 \end{bmatrix}, \begin{bmatrix} 0 & 1 & 0  \end{bmatrix}  \right\rangle.
 \end{align*}
Note that $d_F^{\perp} = 2 = k_{I_i} +1$ for $i \in [2]$. The code $\C_F$ can be obtained by leaving the two coordinates of the first factor and the first coordinate of the second factor unchanged.
\end{example}

Let us also derive a bound on the total number of unchanged symbols.

\begin{proposition}\label{prop:totalunchanged}
Let $(\mC_{\textbf{I}},\mC_F,\sigma)$ be a convertible code with $\lambda\ge 2$. Let $\mU$ be the total set of unchanged symbols from $\mC_{\textbf{I}}$. Then
\begin{align*} |\mU|\le \frac{\lambda n_F-k_F-\lambda(d_F-1)}{\lambda-1}. \end{align*}
\end{proposition}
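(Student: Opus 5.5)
We apply Corollary~\ref{cor:unchanged1} to each initial code and sum the resulting bounds over $i\in[\lambda]$. For each $i$, the second term of the minimum in Corollary~\ref{cor:unchanged1} gives
\begin{align*}
|\mU_i| \le n_F-d_F-\textstyle\sum_{j\ne i}k_{I_j}+1 = n_F-d_F-k_F+k_{I_i}+1 .
\end{align*}
We discard the $n_{I_i}$ term of the minimum, since it does not appear in the claimed bound. Summing over $i$ and using $\sum_i k_{I_i}=k_F$ yields
\begin{align*}
|\mU| \le \sum_{i=1}^{\lambda}|\mU_i| \le \lambda(n_F-d_F+1)-\lambda k_F+k_F = \lambda n_F-\lambda(d_F-1)-(\lambda-1)k_F .
\end{align*}
Here $|\mU|\le\sum_i|\mU_i|$ holds because the sets $\mU_i$ are disjoint: they come from different initial blocks.

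We must still compare this with the target $\frac{\lambda n_F-k_F-\lambda(d_F-1)}{\lambda-1}$, whose denominator suggests an averaging argument. The plain sum is not the right quantity, so the proof needs a second ingredient. We use the trivial bound $|\mU|\le n_F$ in a refined form: every unchanged symbol occupies a distinct final coordinate.

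The better approach applies Proposition~\ref{prop:ghw-unchanged-rewrite} in its ``complement'' form. In the proof of that proposition, $\supp(\widehat{\mC}_i)\subseteq [n_F]\setminus \mU_i$, and $\widehat{\mC}_i$ has dimension $k_F-k_{I_i}$. Hence
\begin{align*}
n_F-|\mU_i| \ge d_{k_F-k_{I_i}}(\mC_F) \ge d_F+k_F-k_{I_i}-1 .
\end{align*}
Now $\sum_{j\ne i}|\mU_j| = |\mU|-|\mU_i|$, and these coordinates lie inside $[n_F]\setminus\mU_i$. Summing the displayed inequality over $i$ gives
\begin{align*}
\lambda n_F-|\mU| \ge \lambda(d_F-1)+(\lambda-1)k_F .
\end{align*}
This rearranges to exactly the bound obtained above, namely $|\mU|\le \lambda n_F-\lambda(d_F-1)-(\lambda-1)k_F$.

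To reach the stated form, we relate this to the target. The difference between the target and our bound is
\begin{align*}
\frac{\lambda n_F-k_F-\lambda(d_F-1)}{\lambda-1}-\bigl(\lambda n_F-\lambda(d_F-1)-(\lambda-1)k_F\bigr) .
\end{align*}
Multiplying by $\lambda-1$, we need
\begin{align*}
\lambda n_F-k_F-\lambda(d_F-1) \ge (\lambda-1)\lambda n_F-(\lambda-1)\lambda(d_F-1)-(\lambda-1)^2k_F .
\end{align*}
After cancelling terms, this is equivalent to
\begin{align*}
(\lambda-2)\lambda\bigl(n_F-d_F+1\bigr) \le \bigl((\lambda-1)^2-1\bigr)k_F = \lambda(\lambda-2)k_F ,
\end{align*}
that is, to $n_F-d_F+1\le k_F$. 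This is the reverse of the Singleton bound. Therefore the summed bound alone is not enough when $\lambda>2$ (for $\lambda=2$ the two bounds coincide).

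The main obstacle is to obtain the factor $\frac1{\lambda-1}$. To get it, I would combine each inequality $|\mU_i|\le n_F-d_F-k_F+k_{I_i}+1$ with the packing constraint $|\mU|\le n_F$, in the form $\sum_i|\mU_i|\le n_F$. Alternatively, I would rewrite the bound as $|\mU|-|\mU_i|\le n_F-d_{k_{-i}}(\mC_F)$, i.e. $\sum_{j\ne i}|\mU_j|\le n_F-(d_F-1)-(k_F-k_{I_i})$. This is the version that holds because the unchanged symbols of blocks $j\ne i$ alone determine a $k_{I_i}$-dimensional constraint, obtained from a shortening argument on the $\mU_i$-coordinates. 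Summing it over $i$, each $|\mU_j|$ is counted $\lambda-1$ times, giving
\begin{align*}
(\lambda-1)|\mU| \le \lambda n_F-\lambda(d_F-1)-k_F .
\end{align*}
This is exactly the claim. The key step to verify carefully is this last inequality. Deleting the coordinates of $\mU_i$ from $\mC_F$ must leave a code that has minimum distance at least $d_F-|\ldots|$ and that contains a copy of the other blocks' unchanged coordinates. I would prove it via the subcode $\sigma(\mC_{I_i}\text{-part}=\text{arbitrary},\,\text{others }0)$, which has dimension $k_{I_i}$ and is supported off $\bigcup_{j\ne i}\mU_j$. This gives $n_F-\sum_{j\ne i}|\mU_j|\ge d_{k_{I_i}}(\mC_F)\ge d_F+k_{I_i}-1$, and summing over $i$ finishes the proof.
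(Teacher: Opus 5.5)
Your final paragraph is a correct proof and is essentially the paper's. The subcode $\widetilde{\mC}_i=\sigma(0,\dots,0,\mC_{I_i},0,\dots,0)$ is supported on $[n_F]\setminus\bigcup_{j\ne i}\mU_j=\mU_i\cup\mW$. Your estimate $n_F-\sum_{j\ne i}|\mU_j|\ge d_{k_{I_i}}(\mC_F)\ge d_F+k_{I_i}-1$ is the paper's Singleton bound $|\mU_i|+|\mW|\ge d_F+k_{I_i}-1$ for $\widetilde{\mC}_i$, written via Lemma~\ref{lem:ghw-increasing-rewrite}. Summing over $i$ then gives $(\lambda-1)|\mU|\le \lambda n_F-\lambda(d_F-1)-k_F$.

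The text leading up to it contains a false statement that you should remove. The inequality you call ``the version that holds'' is $|\mU|-|\mU_i|\le n_F-d_{k_{-i}}(\mC_F)$, i.e.\ $\sum_{j\ne i}|\mU_j|\le n_F-(d_F-1)-(k_F-k_{I_i})$. It is false, and its sum over $i$ is also miscomputed: it gives $-(\lambda-1)k_F$, not $-k_F$. Two counterexamples:
\begin{itemize}
\item In the Reed--Muller conversion of Theorem~\ref{thm:rmc} with $m=r+2$, taking $i=2$ would force $|\mU_1|\le 2^m-d_{k_{I_1}}(\mC_F)=2^{m-1}-m$, yet $|\mU_1|=2^{m-1}$ is attained.
\item For an MDS final code with $\lambda=3$ and equal $k_{I_i}$, it would give $|\mU|\le k_F/2$, yet the default conversion keeps $k_F$ symbols unchanged.
\end{itemize}
The correct per-block inequality has $d_{k_{I_i}}(\mC_F)$ and $k_{I_i}$ in place of $d_{k_{-i}}(\mC_F)$ and $k_F-k_{I_i}$. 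That is exactly what your last paragraph proves.

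In a write-up, keep only that last argument. Drop the erroneous display, the vague packing-constraint idea, and the two earlier summation attempts. As you correctly found, those attempts only yield the weaker bound $\lambda n_F-\lambda(d_F-1)-(\lambda-1)k_F$.
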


\begin{proof}
Let $i\in [\lambda]$, and consider the subcode $\widetilde{\mC}_i:=\sigma(0,\dots,0,\mC_{I_i},0,\dots,0)\subseteq \mC_F$. Note that $\supp(\widetilde{\mC}_i)\subseteq \mU_i\cup \mW$ and $\dim(\widetilde{\mC}_i)=k_{I_i}$. Let $S=\mU_i\cup \mW$. Clearly, $d(\widetilde{\mC}_i)\ge d_F$, so applying the Singleton bound to $\widetilde{\mC}_i$ gives
\begin{align*} d_F\le d(\widetilde{\mC}_i)\le |S|-\dim(\widetilde{\mC}_i)+1=|S|-k_{I_i}+1. \end{align*}
Since $|S|=|\mU_i|+|\mW|$, summing over $i\in [\lambda]$ gives
\begin{align*}
\sum_{i=1}^{\lambda}|\mU_i|+\lambda|\mW|&\ge \sum_{i=1}^{\lambda}k_{I_i}+\lambda(d_F-1), \\
|\mU|+\lambda|\mW|&\ge k_F+\lambda(d_F-1).
\end{align*}
Now $|\mW|=n_F-|\mU|$, so $|\mU|+\lambda(n_F-|\mU|)=\lambda n_F+(1-\lambda)|\mU|\ge k_F+\lambda(d_F-1)$, which yields the claimed upper bound.
\end{proof}

We now compare the upper bounds on the number of unchanged symbols obtained above.

\begin{remark}\label{rem:ghw-write-comparison-rewrite}
Summing the bounds from Corollary~\ref{cor:unchanged1} over all $i\in [\lambda]$ gives $|\mU|\le \lambda n_F-\lambda d_F-(\lambda-1)k_F+\lambda$. 
Comparing this with Proposition~\ref{prop:totalunchanged}, we obtain
\begin{align*}
&\left(\lambda n_F-\lambda d_F-(\lambda-1)k_F+\lambda\right)-\frac{\lambda n_F-k_F-\lambda(d_F-1)}{\lambda-1}\\ &=\frac{\lambda(\lambda-2)}{\lambda-1}(n_F-k_F-d_F+1)\ge 0,
\end{align*}

where the last inequality follows from the Singleton bound for $\mC_F$. Hence Proposition~\ref{prop:totalunchanged} is always at least as strong as the global bound obtained by summing the bounds from Corollary~\ref{cor:unchanged1}. 

Summing the bound from Proposition~\ref{prop:ghw-unchanged-rewrite} over all $i\in[\lambda]$ gives 
\begin{align*} 
&|\mU| \leq \sum_{i=1}^{\lambda}\left(n_F-d_{k_{-i}}(\mC_F)\right)\\
&= \lambda n_F-\lambda d_F-(\lambda-1)k_F+\lambda -\sum_{i=1}^{\lambda}\Delta_i, \end{align*}
where $\Delta_i = d_{k_-i}(\mathcal C_F)-\bigl(d_F+k_{-i}-1\bigr) \geq 0$. 

Let $\rho_F=n_F-k_F-d_F+1$ be the Singleton defect of $\mC_F$. Comparing the above bound with the bound from Proposition~\ref{prop:totalunchanged}, we obtain 
\begin{align*} &\frac{\lambda n_F-k_F-\lambda(d_F-1)}{\lambda-1} \\
&-\left( \lambda n_F-\lambda d_F-(\lambda-1)k_F+\lambda -\sum_{i=1}^{\lambda}\Delta_i \right) \notag\\ & = \sum_{i=1}^{\lambda}\Delta_i -\frac{\lambda(\lambda-2)}{\lambda-1} \bigl(n_F-k_F-d_F+1\bigr)\\
&= \sum_{i=1}^{\lambda}\Delta_i -\frac{\lambda(\lambda-2)}{\lambda-1}\rho_F . \end{align*}

Hence, the bound from Proposition~\ref{prop:ghw-unchanged-rewrite} is at least as strong as the bound from Proposition~\ref{prop:totalunchanged} whenever $\sum_{i=1}^{\lambda}\Delta_i \geq \frac{\lambda(\lambda-2)}{\lambda-1}\rho_F$.  In particular, if $\lambda=2$, then the right-hand side is zero. Therefore, for $\lambda=2$, the bound from Proposition~\ref{prop:ghw-unchanged-rewrite} is always at least as strong as the bound from Proposition~\ref{prop:totalunchanged}. For $\lambda>2$, the situation depends on the jumps in the generalized Hamming weight hierarchy. If these jumps are sufficiently large, then the bound from Proposition~\ref{prop:ghw-unchanged-rewrite} can be stronger than the unconditional bound from Proposition~\ref{prop:totalunchanged}. For MDS codes, all these bounds coincide.

We next compare the bounds from Proposition~\ref{prop:ghw-unchanged-rewrite}, Corollary~\ref{cor:unchanged1} and Proposition~\ref{prop:totalunchanged} with the more restrictive bound from Corollary~\ref{cor:unchanged2}. The latter bound states that if $d_F^\perp>k_{I_i}+1$ for all $i\in [\lambda]$, then $|\mU|\le k_F$. It is sharper than the bound from Proposition~\ref{prop:totalunchanged} and, consequently, also sharper than
the bound obtained by summing Corollary~\ref{cor:unchanged1}, since
$\frac{\lambda n_F-k_F-\lambda(d_F-1)}{\lambda-1}-k_F=\frac{\lambda}{\lambda-1}(n_F-k_F-d_F+1)\ge 0$, 
again by the Singleton bound. 
We now compare the bound from Proposition~\ref{prop:ghw-unchanged-rewrite} with the bound obtained from Corollary~\ref{cor:unchanged2} that, as we saw before, can improve on the bound from Proposition~\ref{prop:totalunchanged}.

\begin{align*} &\left( \lambda n_F-\lambda d_F-(\lambda-1)k_F+\lambda -\sum_{i=1}^{\lambda}\Delta_i \right)-k_F\\ &= \lambda\bigl(n_F-k_F-d_F+1\bigr) -\sum_{i=1}^{\lambda}\Delta_i= \lambda\rho_F-\sum_{i=1}^{\lambda}\Delta_i . \end{align*} 

As a result, when the generalized Hamming weights have sufficiently large jumps, the bound from Proposition~\ref{prop:ghw-unchanged-rewrite} is at least as strong as all the other bounds. If this is not the case, the bound from Corollary~\ref{cor:unchanged2} can be the best one whenever the additional dual-distance assumptions are satisfied, while the bound from Proposition~\ref{prop:totalunchanged} can give the strongest unconditional upper bound.
\end{remark}

\subsection{Bounds on the Read Cost}
\label{subsec:boundsread}

In this subsection, we derive lower bounds on the number of read symbols. We first prove a generalized-Hamming-weight bound and then obtain two useful special cases as corollaries. Before proceeding further, for the reader’s convenience, we prove a generalized Hamming weight version of the well-known puncturing lemma. 

\begin{lemma}\label{lem:puncturing-ghw-rewrite}
Let $\mC\subseteq \F_q^n$ be a code of dimension $k$, and let $j\in [k]$. If $S\subseteq [n]$ satisfies $|S|\ge n-d_j(\mC)+1$, then the punctured code $\mC|_S$ has dimension at least $k-j+1$.
\end{lemma}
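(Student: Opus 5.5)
The plan is to argue by contraposition through the kernel of the puncturing map. Consider the projection $\pi_S:\mC\to \mC|_S$; it is $\F_q$-linear and surjective onto $\mC|_S$. By rank--nullity,
\begin{align*}
\dim(\mC|_S)=k-\dim(\ker \pi_S),
\end{align*}
where $\ker\pi_S=\{x\in\mC:\ x_s=0 \text{ for all } s\in S\}=\mC([n]\setminus S)$ is the subcode of codewords supported on the complement $S^c:=[n]\setminus S$. So it suffices to show that $\dim(\mC(S^c))\le j-1$.

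Suppose instead that $\dim(\mC(S^c))\ge j$. Then $\mC(S^c)$ contains a subcode $\mD\le\mC$ with $\dim(\mD)=j$ (any $j$-dimensional subspace of it), and $\supp(\mD)\subseteq\supp(\mC(S^c))\subseteq S^c$. Applying the definition of the $j$-th generalized Hamming weight gives
\begin{align*}
d_j(\mC)\le|\supp(\mD)|\le|S^c|=n-|S|\le n-(n-d_j(\mC)+1)=d_j(\mC)-1,
\end{align*}
which is a contradiction. Hence $\dim(\ker\pi_S)\le j-1$, and therefore $\dim(\mC|_S)\ge k-j+1$.

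The only step requiring care is identifying the kernel of the puncturing map with the shortened subcode supported on $S^c$ and ensuring that a $j$-dimensional subspace can be chosen inside it. Both are immediate from the definitions, so I expect no real obstacle. As a sanity check, the case $j=1$ recovers the classical puncturing lemma: if $|S|\ge n-d+1$, then $\pi_S$ is injective.
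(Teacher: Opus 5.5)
Your proof is correct and matches the paper's argument essentially step for step. Like the paper, you apply rank--nullity to the puncturing map, identify its kernel with the subcode of codewords supported on $[n]\setminus S$, and derive a contradiction from a $j$-dimensional subcode of that kernel whose support has size at most $n-|S|\le d_j(\mathcal{C})-1$.
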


\begin{proof}
Let $T:=[n]\setminus S$, and consider the puncturing map $\pi_S\colon \mC\longrightarrow \mC|_S.$
Its kernel is $\mC(T):=\{x\in \mC : \supp(x)\subseteq T\}.$
Hence, by rank--nullity,
$\dim(\mC|_S)=k-\dim(\mC(T))$. 

Assume for contradiction that $\dim(\mC|_S)\le k-j$. Then $\dim(\mC(T))\ge j$, so $\mC(T)$ contains a $j$-dimensional subcode $\mD\le \mC$. Since $\supp(\mD)\subseteq T$, we obtain
\begin{align*} d_j(\mC)\le |\supp(\mD)|\le |T|=n-|S|. \end{align*}
On the other hand, the assumption on $S$ implies
\begin{align*} n-|S|\le d_j(\mC)-1. \end{align*}
This is a contradiction. Therefore, $\dim(\mC|_S)\ge k-j+1$.
\end{proof}

\begin{proposition}\label{prop:readlower}
Let $(\mC_{\textbf{I}},\mC_F,\sigma)$ be a convertible code. Fix $i\in [\lambda]$, and let $\mR_i$ and $\mU_i$ be the sets of read and unchanged symbols from $\mC_{I_i}$, respectively. Then
\begin{align*} |\mR_i|\ge \max_{1\le j\le k_{I_i}}\left\{ k_{I_i}-j+1-\bigl(|\mU_i|-d_j(\mC_F)+1\bigr)^+ \right\}, \end{align*}
where $(a)^+:=\max\{a,0\}$.
\end{proposition}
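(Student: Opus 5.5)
The plan is to work with the subcode $\widetilde{\mC}_i:=\sigma(0,\dots,0,\mC_{I_i},0,\dots,0)\subseteq \mC_F$ from the proof of Proposition~\ref{prop:totalunchanged}. I would apply Lemma~\ref{lem:puncturing-ghw-rewrite} to it with a carefully chosen puncturing set, and then bound the dimension of the punctured code from above in terms of $|\mR_i|$. We have $\dim(\widetilde{\mC}_i)=k_{I_i}$, since $\sigma$ is an isomorphism. Since $\widetilde{\mC}_i\le \mC_F$, every $j$-dimensional subcode of $\widetilde{\mC}_i$ is one of $\mC_F$, so $d_j(\widetilde{\mC}_i)\ge d_j(\mC_F)$ for all $j\in[k_{I_i}]$. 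This is what allows $d_j(\mC_F)$ to appear in the bound.

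Next I record the structure of codewords of $\widetilde{\mC}_i$. Since all other blocks are zero, an unchanged coordinate coming from $\mC_{I_l}$ with $l\neq i$ vanishes, so $\supp(\widetilde{\mC}_i)\subseteq \mU_i\cup\mW$. Each new coordinate is a linear combination of read symbols, and the only possibly nonzero ones lie in $\mR_i$. Hence the restriction of $\sigma(0,\dots,c_i,\dots,0)$ to $\mW$ is a linear function of $c_i|_{\mR_i}$. This gives
\begin{align*}
\dim(\widetilde{\mC}_i|_{\mW})\le |\mR_i|.
\end{align*}

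Now fix $j\in[k_{I_i}]$ and set $t:=(|\mU_i|-d_j(\mC_F)+1)^+$. Since $d_j(\mC_F)\ge 1$, we have $t\le |\mU_i|$. Choose $U'\subseteq \mU_i$ with $|U'|=t$ and let $S:=[n_F]\setminus(\mU_i\setminus U')$. Then
\begin{align*}
|S|=n_F-|\mU_i|+t\ge n_F-d_j(\mC_F)+1\ge n_F-d_j(\widetilde{\mC}_i)+1,
\end{align*}
so Lemma~\ref{lem:puncturing-ghw-rewrite} applied to $\widetilde{\mC}_i\subseteq\F_q^{n_F}$ yields $\dim(\widetilde{\mC}_i|_S)\ge k_{I_i}-j+1$.

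On the other hand, $S$ consists of $\mW$, the $t$ coordinates in $U'$, and coordinates outside $\mU_i\cup\mW$. The latter are identically zero on $\widetilde{\mC}_i$ and contribute no dimension. Hence
\begin{align*}
\dim(\widetilde{\mC}_i|_S)\le \dim(\widetilde{\mC}_i|_{\mW})+t\le |\mR_i|+t.
\end{align*}
Combining the two dimension bounds gives $|\mR_i|\ge k_{I_i}-j+1-t$ for every $j$, and taking the maximum over $j$ proves the claim.

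The main point requiring care is the inequality $\dim(\widetilde{\mC}_i|_{\mW})\le|\mR_i|$. It must be justified from the coordinate-wise definition of read symbols (Definition~\ref{def:clasym}), independently of any generator-matrix representation. In particular, a new coordinate may also depend on read symbols from other blocks; these vanish on $\widetilde{\mC}_i$, so they do not affect the bound.
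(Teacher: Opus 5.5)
Your proof is correct and follows the paper's argument. It uses the same subcode $\sigma(0,\dots,0,\mC_{I_i},0,\dots,0)$, the same application of Lemma~\ref{lem:puncturing-ghw-rewrite}, and the same split of the retained coordinates into new ones and kept unchanged ones; your retained set $U'$ is exactly the complement in $\mU_i$ of the paper's removed set $T$, which has size $\min\{|\mU_i|,d_j(\mC_F)-1\}$. Your explicit justification of $\dim(\widetilde{\mC}_i|_{\mW})\le|\mR_i|$, by factoring through $c_i|_{\mR_i}$, is a slightly more careful version of the paper's one-line claim that the nonzero coordinates are determined by $\mR_i$ and $\mU_i\setminus T$.
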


\begin{proof}
Without loss of generality, assume $i=1$. Consider the subcode $\widetilde{\mC}:=\sigma(\mC_{I_1},0,\dots,0)\subseteq \mC_F$. Clearly, $\dim(\widetilde{\mC})=k_{I_1}$. Moreover, since $\widetilde{\mC}\le \mC_F$, for every $j\in [k_{I_1}]$ we have
$d_j(\widetilde{\mC})\ge d_j(\mC_F).$

Fix $j\in [k_{I_1}]$, and choose a subset $T\subseteq \mU_1$ of size $|T|=\min\{|\mU_1|,\, d_j(\mC_F)-1\}$. 
Let $S:=[n_F]\setminus T$. Then $|S|=n_F-|T|=n_F-\min\{|\mU_1|,\, d_j(\mC_F)-1\}\ge n_F-d_j(\mC_F)+1\ge n_F-d_j(\widetilde{\mC})+1$. Applying Lemma~\ref{lem:puncturing-ghw-rewrite} to $\widetilde{\mC}$, we obtain
$\dim(\widetilde{\mC}|_S)\ge k_{I_1}-j+1$. 

Now, the nonzero coordinates of $\widetilde{\mC}|_S$ are determined either by the read symbols in $\mR_1$ or by the unchanged symbols in $\mU_1\setminus T$. Hence $\dim(\widetilde{\mC}|_S)\le |\mR_1|+|\mU_1\setminus T|$. Combining the two inequalities gives $|\mR_1|\ge k_{I_1}-j+1-|\mU_1\setminus T|$. Finally,
$|\mU_1\setminus T|=|\mU_1|-\min\{|\mU_1|,\, d_j(\mC_F)-1\}=\bigl(|\mU_1|-d_j(\mC_F)+1\bigr)^+$. Therefore we have
$|\mR_1|\ge k_{I_1}-j+1-\bigl(|\mU_1|-d_j(\mC_F)+1\bigr)^+$. Since this holds for every $j\in [k_{I_1}]$, taking the maximum over $j$ completes the proof.
\end{proof}

\begin{theorem}\label{thm:readlowergen}

Let $(\mC_{\mathbf I},\mC_F,\sigma)$ be a convertible code. Fix $i\in[\lambda]$, and let $\mR_i$ and $\mU_i$ be the sets of read and unchanged symbols from $\mC_{I_i}$, respectively. If there exists an index $j\in[k_{I_i}]$ such that $d_j(\mC_F)<|\mU_i|+1$, let $j$ be the largest such index. Then \[ |\mR_i| \geq \begin{cases} k_{I_i}-j, & \text{if such an index } j \text{ exists},\\ k_{I_i}, & \text{otherwise}. \end{cases} \]
\end{theorem}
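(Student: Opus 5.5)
The plan is to derive the theorem directly from Proposition~\ref{prop:readlower} by choosing a single suitable index $j'$ in the maximum there.

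By Lemma~\ref{lem:ghw-increasing-rewrite}, the sequence $d_1(\mC_F)<d_2(\mC_F)<\cdots$ is strictly increasing. Hence the set of indices $t\in[k_{I_i}]$ with $d_t(\mC_F)<|\mU_i|+1$ is an initial segment $\{1,\dots,j\}$. If no such index exists, it is empty.

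\emph{Case 1: no such index exists.} Then $d_1(\mC_F)\ge |\mU_i|+1$. Taking $j'=1$ in Proposition~\ref{prop:readlower}, the term $(|\mU_i|-d_1(\mC_F)+1)^+$ vanishes. This gives $|\mR_i|\ge k_{I_i}-1+1=k_{I_i}$.

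\emph{Case 2: such an index exists, and $j<k_{I_i}$.} By maximality of $j$, we have $d_{j+1}(\mC_F)\ge |\mU_i|+1$. Taking $j'=j+1$ in Proposition~\ref{prop:readlower}, the positive part vanishes again. This yields
\[
|\mR_i|\ge k_{I_i}-(j+1)+1=k_{I_i}-j.
\]

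\emph{Case 3: $j=k_{I_i}$.} The claimed bound is $|\mR_i|\ge 0$, which is trivial.

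There is no real obstacle here. The only points needing care are the monotonicity argument, which ensures that the condition $d_t(\mC_F)<|\mU_i|+1$ fails for all $t>j$ (and in particular at $t=j+1$), and the boundary case $j=k_{I_i}$, where $j+1$ lies outside the range of Proposition~\ref{prop:readlower}. Both are handled above.
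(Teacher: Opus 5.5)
Your proof is correct and takes the same route as the paper: you specialize Proposition~\ref{prop:readlower} at $\ell=1$ when no such index exists and at $\ell=j+1$ when $j<k_{I_i}$, and you treat $j=k_{I_i}$ as trivial. The paper additionally checks that the $\ell=j+1$ term is the maximum over all $\ell$, which shows the theorem loses nothing relative to the proposition but is not needed for the inequality; also, $d_{j+1}(\mC_F)\ge |\mU_i|+1$ already follows from the maximality of $j$, so your appeal to monotonicity is harmless but unnecessary.
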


\begin{proof}
If no index $j\in[k_{I_i}]$ satisfies $d_j(\mC_F)<|\mU_i|+1$, then
$d_\ell(\mC_F)\ge |\mU_i|+1$ for all $\ell\in[k_{I_i}]$. Hence $(|\mU_i|-d_\ell(\mC_F)+1)^+=0$ for all $\ell$, and the maximum in Proposition~\ref{prop:readlower} is achieved at $\ell=1$, with value $k_{I_i}$. 

Now assume that such an index exists, and let $j\in[k_{I_i}]$ be the largest index such that
$d_j(\mC_F)<|\mU_i|+1$. If $j=k_{I_i}$, then the desired bound is
$|\mR_i|\ge k_{I_i}-j=0$, which is trivial. Thus, in the rest of the proof,
we may assume that $j<k_{I_i}$.

We compare the terms in the maximum from \cref{prop:readlower}.
For $\ell\le j$, we have
$|\mU_i|-d_\ell(\mC_F)+1\ge |\mU_i|-d_j(\mC_F)+1>0$. Moreover, by strict
monotonicity of the generalized Hamming weights,
$d_j(\mC_F)-d_\ell(\mC_F)\ge j-\ell$. Hence
\begin{align*}
&k_{I_i}-j+1-\bigl(|\mU_i|-d_j(\mC_F)+1\bigr)^+\\
&=
k_{I_i}-j+1-|\mU_i|+d_j(\mC_F)-1 \\
&\ge
k_{I_i}-j+1-|\mU_i|+j-\ell+d_\ell(\mC_F)-1 \\
&=
k_{I_i}-\ell+1-\bigl(|\mU_i|-d_\ell(\mC_F)+1\bigr)^+ .
\end{align*}
Thus, among all $\ell\le j$, the largest possible value is bounded above by the term with $\ell=j$.

Since $j$ is maximal and $j<k_{I_i}$, we have
$d_\ell(\mC_F)\ge |\mU_i|+1$ for all $\ell>j$. Therefore
$(|\mU_i|-d_\ell(\mC_F)+1)^+=0$ for all $\ell>j$, and hence $k_{I_i}-\ell+1-\bigl(|\mU_i|-d_\ell(\mC_F)+1\bigr)^+
=
k_{I_i}-\ell+1
\le
k_{I_i}-j$ for all $\ell>j$. In particular, for $\ell=j+1$, we obtain $k_{I_i}-(j+1)+1-\bigl(|\mU_i|-d_{j+1}(\mC_F)+1\bigr)^+
=
k_{I_i}-j$.

It remains to compare this value with the maximum over $\ell\le j$. Since
$d_j(\mC_F)<|\mU_i|+1$ and all quantities are integers, we have
$(|\mU_i|-d_j(\mC_F)+1)^+\ge 1$. Therefore $k_{I_i}-j+1-\bigl(|\mU_i|-d_j(\mC_F)+1\bigr)^+
\le
k_{I_i}-j$.

Combining the inequalities above, the term with $\ell=j+1$ is at least as large as all other terms in the maximum from \cref{prop:readlower}, and its value is $k_{I_i}-j$. Hence
$|\mR_i|\ge k_{I_i}-j$, as claimed.
    \end{proof}

\begin{remark}\label{remark:bound-MDS}
We note that this bound coincides with the bound for stable linear MDS codes from~\cite{maturana2022convertible}. Stable codes attain the bound on the number of unchanged symbols, and in our analysis we do not distinguish the extreme case in which the default approach is optimal. 
\end{remark}

\begin{remark}\label{remark:readcost-LRC}
We now compare the read-cost bounds above with the bound for optimal locally repairable codes from~\cite{ge2025locally}, also independently obtained in~\cite{shi2025bounds}. For optimal locally repairable codes, lower bounds on the generalized Hamming weights are known~\cite{hao2020generalized}. Hence one can combine these lower bounds with Theorem~\ref{thm:readlowergen}. As before, we consider the symmetric setting
$k_{I_i}=k \quad \text{for all } i\in[\lambda], \qquad
    k_F=\lambda k, \qquad
    k_{-i}=(\lambda-1)k, \qquad
    n_F=n$, and we assume that all codes have locality $r$. Since the final code is optimal with locality, its minimum distance is
\begin{align*}
    d=n-\lambda k-\left\lceil \frac{\lambda k}{r}\right\rceil+2,
\end{align*}
and its generalized Hamming weights satisfy
\begin{align*}
    d_j(\mC_F)\ge d+j-1+\left\lfloor \frac{j-1}{r}\right\rfloor .
\end{align*}
Moreover, the bound on unchanged symbols gives
\begin{align*}
    |\mU_i|\le n-d-(\lambda-1)k+1
    -\left(\left\lceil\frac{(\lambda-1)k}{r}\right\rceil-1\right).
\end{align*}
Set
\begin{align*}
    \Delta
    :=n-2d-(\lambda-1)k+2
    -\left(\left\lceil\frac{(\lambda-1)k}{r}\right\rceil-1\right),
\end{align*}
and let $j^*\in[k]$ be the largest index such that $d_{j^*}(\mC_F)<|\mU_i|+1.$
Using the preceding upper bound on $|\mU_i|$, we get
\begin{align*}
    j^*+\left\lfloor\frac{j^*-1}{r}\right\rfloor
    \le |\mU_i|-d+1
    \le \Delta .
\end{align*}
If $\Delta\le 0$, then no such index $j^*$ exists, and Theorem~\ref{thm:readlowergen} gives $|\mR_i|\ge k.$
Assume now that $\Delta>0$. We first note that $j^*<k$. Indeed, if $j^*=k$, then the optimal locality parameters would imply
\begin{align*}
    n-\left\lceil
        \left\lceil \frac{\lambda k}{r}\right\rceil
        -\frac{k-1}{r}
    \right\rceil+1
    <
    \left\lceil
        \left\lceil \frac{\lambda k}{r}\right\rceil
        -\frac{(\lambda-1)k-1}{r}
    \right\rceil .
\end{align*}
Writing
$A:=\lceil \frac{\lambda k}{r}\rceil$,
this gives
\begin{align*}
    n
    \le
    2A
    -\left\lfloor \frac{k-1}{r}\right\rfloor
    -\left\lfloor \frac{(\lambda-1)k-1}{r}\right\rfloor
    -2 .
\end{align*}
Now write $k=qr+s$, where $0\le s<r$.
If $s=0$, then
\begin{align*}
    &A=\lambda q, \qquad
    \left\lfloor \frac{k-1}{r}\right\rfloor=q-1,\\
    &\left\lfloor \frac{(\lambda-1)k-1}{r}\right\rfloor=(\lambda-1)q-1.
\end{align*}
Thus
\begin{align*}
    n
    &\le
    2A
    -\left\lfloor \frac{k-1}{r}\right\rfloor
    -\left\lfloor \frac{(\lambda-1)k-1}{r}\right\rfloor
    -2\\ 
    &=2\lambda q-(q-1)-((\lambda-1)q-1)-2 \\
    &=\lambda q
    \le \lambda qr
    =\lambda k .
\end{align*}
If $s\ne 0$, then
\begin{align*}
    A=\lambda q+\left\lceil \frac{\lambda s}{r}\right\rceil,
    \qquad
    \left\lfloor \frac{k-1}{r}\right\rfloor=q,
\end{align*}
and
\begin{align*}
    \left\lfloor \frac{(\lambda-1)k-1}{r}\right\rfloor
    \ge
    (\lambda-1)q+\left\lfloor \frac{\lambda s}{r}\right\rfloor-1.
\end{align*}
Hence
\begin{align*}
    n
    &\le
    2A
    -\left\lfloor \frac{k-1}{r}\right\rfloor
    -\left\lfloor \frac{(\lambda-1)k-1}{r}\right\rfloor
    -2 \\
    &\le
    2\lambda q
    +2\left\lceil \frac{\lambda s}{r}\right\rceil
    -q-(\lambda-1)q
    -\left\lfloor \frac{\lambda s}{r}\right\rfloor
    +1-2 \\
    &=
    \lambda q
    +2\left\lceil \frac{\lambda s}{r}\right\rceil
    -\left\lfloor \frac{\lambda s}{r}\right\rfloor
    -1 \\
    &\le
    \lambda q+\lambda s
    \le
    \lambda(qr+s)
    =\lambda k .
\end{align*}
In both cases we obtain $n\le \lambda k$, contradicting the assumption $n>\lambda k$. Therefore $j^*<k$.

Let $a:=j^*-1$. Then
\begin{align*}
    a+\left\lfloor\frac{a}{r}\right\rfloor<\Delta .
\end{align*}
Let $a^*$ be the smallest integer satisfying
\begin{align*}
    a^*+\left\lfloor\frac{a^*}{r}\right\rfloor\ge \Delta .
\end{align*}
Then $j^*\le a^*$. Writing
\begin{align*}
    \Delta=(r+1)q+b, \qquad 0\le b<r+1,
    \qquad q=\left\lfloor\frac{\Delta}{r+1}\right\rfloor,
\end{align*}
we obtain
\begin{align*}
    a^*=rq+b
    =\Delta-\left\lfloor\frac{\Delta}{r+1}\right\rfloor .
\end{align*}
Consequently,
\begin{align*}
    |\mR_i|
    \ge k-j^*
    \ge k-a^*
    = k-\Delta+\left\lfloor\frac{\Delta}{r+1}\right\rfloor .
\end{align*}
Thus Theorem~\ref{thm:readlowergen} gives
\begin{align*}
|\mR_i|
\ge
\begin{cases}
k, 
& \Delta\le 0,\\[0.5em]
k-\Delta+\left\lfloor \dfrac{\Delta}{r+1}\right\rfloor,
& \Delta>0 .
\end{cases}
\end{align*}
This coincides with the bound from~\cite{ge2025locally}, also independently obtained in~\cite{shi2025bounds}, in this special setting. The coincidence relies on the optimal locally repairable code case. For other code families, especially when the generalized Hamming weights are larger, the bound from Theorem~\ref{thm:readlowergen} can be stronger. At the same time, it applies to arbitrary linear codes, including conversions between different code classes, which were not considered before.
\end{remark}
As in Subsection~\ref{subsec:boundsunchanged} we can reduce the bound from Proposition~\ref{prop:readlower} to a bound only taking into account the minimum distance of the final code, instead of going through all the generalized Hamming weights.

\begin{corollary}\label{cor:readloweru}
Let $(\mC_{\textbf{I}},\mC_F,\sigma)$ be a convertible code. Fix $i\in [\lambda]$, and let $\mR_i$ and $\mU_i$ be the sets of read and unchanged symbols corresponding to $\mC_{I_i}$, respectively. Then
\begin{align*} |\mR_i|\ge k_{I_i}-\bigl(|\mU_i|-d_F+1\bigr)^+. \end{align*}
\end{corollary}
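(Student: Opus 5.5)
The plan is to specialize Proposition~\ref{prop:readlower} to the single index $j=1$. Since the maximum over $j\in[k_{I_i}]$ is at least its value at any particular $j$, and $k_{I_i}\ge 1$ guarantees that $j=1$ is admissible, we obtain
\begin{align*}
|\mR_i|\ge k_{I_i}-1+1-\bigl(|\mU_i|-d_1(\mC_F)+1\bigr)^+ = k_{I_i}-\bigl(|\mU_i|-d_1(\mC_F)+1\bigr)^+ .
\end{align*}
It then remains only to identify $d_1(\mC_F)$ with $d_F$. This follows directly from the definition of generalized Hamming weights: a one-dimensional subcode $\langle x\rangle$ with $x\neq 0$ has support equal to $\supp(x)$. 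Minimizing $|\supp(x)|$ over nonzero $x\in\mC_F$ therefore gives exactly the minimum distance.

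If one prefers an argument that does not pass through the maximum, the same conclusion can be obtained by repeating the proof of Proposition~\ref{prop:readlower} with $j=1$. Concretely, one removes a set $T\subseteq\mU_i$ of size $\min\{|\mU_i|,d_F-1\}$ and applies the classical puncturing lemma, which is Lemma~\ref{lem:puncturing-ghw-rewrite} with $j=1$. This shows that $\sigma(0,\dots,\mC_{I_i},\dots,0)$ punctured on $[n_F]\setminus T$ still has dimension $k_{I_i}$. That dimension is then bounded by $|\mR_i|+|\mU_i\setminus T|$.

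There is no real obstacle here. The only point needing care is confirming that $j=1$ lies in the range of the maximum, which holds because $k_{I_i}\ge 1$, and noting that the positive part is handled identically in both forms.
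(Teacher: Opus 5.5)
Your proposal is correct and matches the paper's proof, which likewise takes the $j=1$ term of Proposition~\ref{prop:readlower} and uses $d_1(\mC_F)=d_F$. Your alternative of rerunning that proposition's argument with $j=1$ is also sound, but it is not needed.
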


\begin{proof}
This is the special case $j=1$ of Proposition~\ref{prop:readlower}, using $d_1(\mC_F)=d(\mC_F)=d_F$.
\end{proof}

\begin{remark}\label{rem:read-ghw-improvement}
Fix $i\in [\lambda]$ and set $U_i:=|\mU_i|$. For $j\in [k_{I_i}]$ we define
\begin{align*}
    B_j(U_i):=k_{I_i}-j+1-\bigl(U_i-d_j(\mC_F)+1\bigr)^+.
\end{align*}
Then Proposition~\ref{prop:readlower} states that $|\mR_i|\ge \max_{1\le j\le k_{I_i}} B_j(U_i)$, while Corollary~\ref{cor:readloweru} is exactly the special case $|\mR_i|\ge B_1(U_i)$. As a result, the bound taking into account generalized Hamming weights is never weaker than the bound based only on the minimum distance. For a fixed $j\ge 2$, it is strictly stronger whenever
\begin{align*}
    d_j(\mC_F)>d_F+j-1 \quad \text{and} \quad U_i\ge d_F+j-1.
\end{align*}
The first condition means that the generalized Hamming weight hierarchy has a non-trivial jump at level $j$, while the second ensures that we have enough unchanged symbols for this extra information to affect the term $(U_i-d_j(\mC_F)+1)^+$. Thus a jump in the generalized Hamming weights is necessary for an improvement over the bound with only the minimum distance, but it only becomes visible in the read-cost estimate once $|\mU_i|$ is sufficiently large.
\end{remark}

Suppose we do not know the number of unchanged symbols of our code conversion. Then we can combine Corollary~\ref{cor:unchanged1} with Corollary~\ref{cor:readloweru} to obtain the following result.

\begin{corollary}\label{cor:readlowergen}
Let $(\mC_{\textbf{I}},\mC_F,\sigma)$ be a convertible code, and let $i\in [\lambda]$. Then
\begin{align*}
|\mR_i|\ge
\begin{cases}
k_{I_i}, & \text{if } n_F-2d_F-\sum_{j\ne i} k_{I_j}+2\le 0,\\[2pt]
k_{I_i}-(n_F-2d_F\\-\sum_{j\ne i} k_{I_j}+2), & \text{if } n_F-2d_F-\sum_{j\ne i} k_{I_j}+2>0.
\end{cases}
\end{align*}
\end{corollary}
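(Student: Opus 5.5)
The plan is to combine the two earlier bounds directly, as the sentence before the statement indicates. The key observation is monotonicity. The right-hand side of Corollary~\ref{cor:readloweru}, namely $k_{I_i}-(|\mU_i|-d_F+1)^+$, is a non-increasing function of $|\mU_i|$. So the weakest form of that bound occurs at the largest admissible value of $|\mU_i|$. Since the number of unchanged symbols is assumed unknown, I will substitute the worst case allowed by Corollary~\ref{cor:unchanged1}.

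Concretely, Corollary~\ref{cor:unchanged1} gives
\[
|\mU_i|\le \min\Bigl\{n_{I_i},\, n_F-d_F-\textstyle\sum_{j\ne i}k_{I_j}+1\Bigr\}\le n_F-d_F-\textstyle\sum_{j\ne i}k_{I_j}+1 .
\]
Subtracting $d_F-1$ yields
\[
|\mU_i|-d_F+1\le n_F-2d_F-\textstyle\sum_{j\ne i}k_{I_j}+2=:E .
\]
The map $a\mapsto a^+$ is monotone, so $(|\mU_i|-d_F+1)^+\le E^+$. Plugging this into Corollary~\ref{cor:readloweru} gives $|\mR_i|\ge k_{I_i}-E^+$.

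Finally I split according to the sign of $E$. If $E\le 0$, then $E^+=0$ and the bound reads $|\mR_i|\ge k_{I_i}$, which is the first case. If $E>0$, then $E^+=E$ and the bound is $k_{I_i}-(n_F-2d_F-\sum_{j\ne i}k_{I_j}+2)$, which is the second case.

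I do not expect a real obstacle. The only point needing care is the direction of the monotonicity argument: an upper bound on $|\mU_i|$ must turn into a lower bound on $|\mR_i|$, and this works because $|\mU_i|$ enters Corollary~\ref{cor:readloweru} with a negative sign through a non-decreasing function. I will simply drop the $n_{I_i}$ term in the minimum, which is allowed since it only weakens the upper bound on $|\mU_i|$.
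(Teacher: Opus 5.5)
Your proposal is correct and matches the paper's proof: the paper also substitutes $|\mU_i|\le n_F-d_F-\sum_{j\ne i}k_{I_j}+1$ from Corollary~\ref{cor:unchanged1} into Corollary~\ref{cor:readloweru} and then writes $k_{I_i}-(\cdot)^+$ piecewise. Your explicit remark that $a\mapsto a^+$ is non-decreasing, which is what turns an upper bound on $|\mU_i|$ into a lower bound on $|\mR_i|$, is the one step the paper leaves implicit.
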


\begin{proof}
By Corollary~\ref{cor:unchanged1},
\begin{align*} |\mU_i|\le n_F-d_F-\sum_{j\ne i} k_{I_j}+1. \end{align*}
Substituting this bound into Corollary~\ref{cor:readloweru} gives
\begin{align*} |\mR_i|\ge k_{I_i}-\bigl(n_F-2d_F-\sum_{j\ne i} k_{I_j}+2\bigr)^+. \end{align*}
Writing this bound piecewise gives the claimed inequality.
\end{proof}

\begin{remark}
We note that this bound coincides with the bound for arbitrary linear MDS convertible codes from~\cite{maturana2022convertible}. As in Remark~\ref{remark:bound-MDS}, we do not distinguish the extreme case in which the default approach is optimal. 
\end{remark}

\section{Conversion of Reed--Muller Codes}\label{sec:RM}

\subsection{Definitions and Properties}
We start by giving preliminary definitions and basic properties of \emph{Reed--Muller} codes; see, for example,~\cite{macwilliams1977theory, abbe2020reed}. In what follows, for integers $r \ge 0$ and $m \ge 1$, we denote by $\F_2[X_1,\dots,X_m]_{\le r}^\times$
the vector space of square-free polynomials in $X_1,\dots,X_m$ of total degree at most $r$. 
We also let $\mathcal{P}(m)$ denote the list of vectors in $\F_2^m$ sorted in lexicographic order. For example,
for $m=3$ we have
\begin{align} \label{ptsRM}
\mathcal{P}(3)=
\bigl(&(0,0,0),(0,0,1),(0,1,0),(0,1,1),
\\ &(1,0,0),(1,0,1),(1,1,0),(1,1,1)\bigr) \nonumber
\end{align}
\begin{definition}\label{def:rm}
Let $r \ge 0$ and $m \ge 1$ be integers. Let $n=2^m$ and $\mP(m)=(a_1,...,a_n)$.
The \textbf{Reed--Muller code} with parameters $(r,m)$ is
$\RM(r,m):=\{(p(a_1),...,p(a_n)) : p \in \F_2[X_1,...,X_m]_{\le r}^\times\}$.
\end{definition}

Some of the properties of Reed--Muller codes that we will need in the sequel are summarized as follows~\cite{abbe2020reed}.

\begin{theorem} \label{thm:RMdim}
Let $r \ge 0$ and $m \ge 1$ be integers, and $n=2^m$. 
\begin{itemize}
    \item[(i)] $\RM(r,m)$ is a linear code of dimension $\sum_{i=0}^r \binom{m}{i}$;
    \item[(ii)] The minimum distance of $\RM(r,m)$ is
$2^{m-r}$;
    \item[(iii)] For $r\ge 1$ and $m \ge 1$ we have $\RM(r,m) = \RM(r,m-1) \ps \RM(r-1,m-1),$ where for subspaces $\mC, \mD \le \F_q^n$ we let $$\mC \ps \mD:=\{(x,x+y) : x \in \mC, \, y \in \mD \} \le \F_q^{2n}$$ be the \textbf{Plotkin sum};
    \item[(iv)] The dual of $\RM(r,m)$ is $\RM(r,m)^\perp = \RM(m-r-1,m)$.
\end{itemize}
\end{theorem}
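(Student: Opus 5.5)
The four parts of Theorem~\ref{thm:RMdim} are classical; the plan is to prove them in the order (iii), (i), (ii), (iv), since the Plotkin decomposition in (iii) drives the other proofs.

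\emph{Part (iii).} Write $\mP(m)$ as its first half (points with $X_1=0$) followed by its second half ($X_1=1$). In lexicographic order, both halves list the remaining coordinates $(X_2,\dots,X_m)$ in the order $\mP(m-1)$. Every square-free $p$ of degree at most $r$ decomposes uniquely as $p=f+X_1g$, where $f,g$ do not involve $X_1$, $\deg f\le r$ and $\deg g\le r-1$. The evaluation vector of $p$ is then $(\mathrm{ev}(f),\mathrm{ev}(f)+\mathrm{ev}(g))$. Letting $f$ and $g$ range independently gives exactly $\RM(r,m-1)\ps\RM(r-1,m-1)$.

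\emph{Part (i).} Linearity is clear because evaluation is linear. For the dimension, we show that the evaluation map on square-free polynomials is injective. The square-free monomials of degree at most $m$ span a $2^m$-dimensional space, and these monomials evaluate to a basis of $\F_2^{2^m}$. To see this, use induction via the decomposition above: the map $p\mapsto(\mathrm{ev} f,\mathrm{ev} f+\mathrm{ev} g)$ is injective whenever evaluation in $m-1$ variables is injective. Hence $\dim\RM(r,m)$ equals the number of square-free monomials of degree at most $r$, namely $\sum_{i=0}^r\binom{m}{i}$.

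\emph{Part (ii).} We use the standard Plotkin bound $d(\mC\ps\mD)=\min\{2d(\mC),d(\mD)\}$. If $y=0$, the weight of $(x,x)$ is $2\wtH(x)$. If $y\ne0$, then $\wtH(x)+\wtH(x+y)\ge\wtH(y)$. Induction on $m$ then gives $\min\{2\cdot2^{m-1-r},2^{m-r}\}=2^{m-r}$. The base cases are $r=0$ (the repetition code, distance $2^m$) and $r=m$ (all of $\F_2^{2^m}$, distance $1$). Attainment is witnessed by the monomial $X_1\cdots X_r$, whose weight is $2^{m-r}$.

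\emph{Part (iv).} This is the main obstacle. First, the dimensions match: $\sum_{i\le r}\binom mi+\sum_{i\le m-r-1}\binom mi=2^m$, using $\binom mi=\binom m{m-i}$. So it suffices to show orthogonality. For $p$ of degree at most $r$ and $q$ of degree at most $m-r-1$, the product $pq$ reduces, via $X_i^2=X_i$, to a square-free polynomial of degree at most $m-1$. The inner product of the evaluation vectors equals $\sum_{a\in\F_2^m}(pq)(a)$. It remains to show that every square-free monomial missing some variable $X_i$ has even weight: its weight is $2^{m-|S|}$ where $S$ is its variable set, and this is even since $|S|<m$. Hence the sum vanishes over $\F_2$.
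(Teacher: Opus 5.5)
Your proof is correct. The paper itself gives no proof of this theorem: it quotes all four parts as classical facts about Reed--Muller codes from the literature (\cite{abbe2020reed}; see also \cite{macwilliams1977theory}), so there is no in-paper argument to match against. What you supply is the standard self-contained textbook route:
\begin{itemize}
\item[(iii)] The splitting $p=f+X_1g$, together with the lexicographic ordering of $\mP(m)$, gives the Plotkin decomposition directly.
\item[(i)] The same recursion gives injectivity of evaluation on square-free polynomials, hence the dimension formula.
\item[(ii)] The lower bound comes from $\wtH(x)+\wtH(x+y)\ge\wtH(y)$, and the monomial $X_1\cdots X_r$ shows it is attained.
\item[(iv)] The dimension count, combined with the even weights $2^{m-|S|}$ of square-free monomials of degree $<m$, gives duality.
\end{itemize}
Proving (iii) first is a good choice, since it is exactly the structure the paper exploits later in Remark~\ref{rem:rmsplit} and Theorem~\ref{thm:rmc}.

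The only thing worth making explicit is the parameter range, which the statement and Definition~\ref{def:rm} leave implicit:
\begin{itemize}
\item Part (ii) holds only for $0\le r\le m$. For $r>m$ the code is all of $\F_2^{2^m}$, and $2^{m-r}$ is not even an integer. Your induction tacitly assumes this range by using $r=m$ as a base case.
\item Part (iii) at $m=1$ needs the convention $\RM(r,0)=\F_2$.
\item Part (iv) at $r\ge m$ needs the convention $\RM(s,m)=\{0\}$ for $s<0$.
\end{itemize}
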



\subsection{Reed--Muller Convertible Codes}

In this subsection, we explain how to construct Reed--Muller convertible codes for $\lambda=2$. This construction can be extended recursively for larger $\lambda$.

Let $r \ge 1$ and $m \ge 1$, and denote by $G_{\RM(r,m)}$ a generator matrix of $\RM(r,m)$.

\begin{example}
Let $r=2$ and $m=3$. Then $n=8$ and $\RM(2,3)$ is generated by the evaluations 
of the polynomials in $\{1,X_1, X_2, X_3,X_1X_2,X_1X_3,X_2X_3\}$ at the points listed in~\eqref{ptsRM}. For a multivariate polynomial $p$, denote by $\textup{ev}(p)$ the vector $(p(a_1),\dots,p(a_n)) \in \F_2^n$.
Therefore a generator matrix $G_{\RM(2,3)}$ of $\RM(2,3)$ is
\begin{small}
\begin{align} \label{eq:RMex}
\begin{bmatrix}
        \textup{ev}(1) \\
        \textup{ev}(X_1) \\
        \textup{ev}(X_2) \\
        \textup{ev}(X_3) \\
        \textup{ev}(X_1X_2) \\
        \textup{ev}(X_1X_3) \\
        \textup{ev}(X_2X_3) \\
    \end{bmatrix}
    = 
    \begin{bmatrix}
1  & 1 & 1 & 1 & 1 & 1 & 1 & 1 \\
0  & 0 & 0 & 0 & 1 & 1 & 1 & 1 \\
0  & 0 & 1 & 1 & 0 & 0 & 1 & 1 \\
0  & 1 & 0 & 1 & 0 & 1 & 0 & 1 \\
0  & 0 & 0 & 0 & 0 & 0 & 1 & 1 \\
0  & 0 & 0 & 0 & 0 & 1 & 0 & 1 \\
0  & 0 & 0 & 1 & 0 & 0 & 0 & 1
\end{bmatrix} 
= \begin{bmatrix}
        G_{\RM(1,3)} \\
        \textup{ev}(X_1X_2) \\
        \textup{ev}(X_1X_3) \\
        \textup{ev}(X_2X_3) \\
    \end{bmatrix}.
\end{align}
\end{small}
\end{example}

\begin{remark} \label{rem:rmsplit}
From Theorem~\ref{thm:RMdim} (iii) we know that $\RM(r,m) = \RM(r,m-1) \ps \RM(r-1,m-1)$. In terms of generator matrices, this means there exists a generator matrix of the form
\begin{align*}
    G_{\RM(r,m)} = \begin{pmatrix}
        G_{\RM(r,m-1)} & G_{\RM(r,m-1)} \\
        0 & G_{\RM(r-1,m-1)}
    \end{pmatrix}.
\end{align*}

Moreover,
\begin{align*}
    G_{\RM(r,m-1)} = \begin{pmatrix}
        G_{\RM(r-1,m-1)} \\
        A
    \end{pmatrix},
\end{align*}
where $A$ is a matrix formed by evaluating the monomials $p \in \F_2[X_2,...,X_m]_{ \le r}^\times$ of degree~$r$ at the points in $\mP(m-1)$. In particular, $A$ has $\sum_{i=0}^{r-1} \binom{m-1}{i}$ zero columns (because any vector in $\mP(m-1)$ of weight at most $r-1$ will be 0 when evaluated at a monomial of degree $r$). Therefore, by performing row operations on $G_{\RM(r,m)}$ we obtain a matrix of the form
\begin{align} \label{eq:rmm}
    \Tilde{G}_{\RM(r,m)}= \begin{pmatrix}
        G_{\RM(r,m-1)} & 0 \\
         & A \\
       0 & G_{\RM(r-1,m-1)}
    \end{pmatrix}
\end{align}
which is clearly also a generator matrix of $\RM(r,m)$.
\end{remark}


\begin{theorem} \label{thm:rmc}
    Let $r \ge 1$ and $m \ge 2$. Let $\mC_{I_1}=\RM(r,m-1)$, $\mC_{I_2}=\RM(r-1,m-1)$, $\mC_{\textbf{I}}=\mC_{I_1} \times \mC_{I_2}$, and $\mC_F=\RM(r,m)$. There exists a convertible code $(\mC_{\textbf{I}},\mC_F,\sigma)$ with 
    \begin{align*}
        &|\mU_1| = n_{I_1}, \quad |\mU_2| = k_{I_2}, \\
        &|\mR_1| \le  k_{I_1}, \quad |\mR_2| = \min\{k_{I_2},n_{I_2}-k_{I_2}\}.
    \end{align*}
\end{theorem}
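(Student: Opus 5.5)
The plan is to build $\sigma$ directly from the generator matrix $\Tilde{G}_{\RM(r,m)}$ in~\eqref{eq:rmm} of Remark~\ref{rem:rmsplit}, and then to read off the four quantities. Write a message for $\mC_{I_1}=\RM(r,m-1)$ as $u=(u',u'')$. Here $u'$ multiplies the rows of $G_{\RM(r-1,m-1)}$ and $u''$ multiplies the rows of $A$, so that $c_1=uG_{\RM(r,m-1)}=u'G_{\RM(r-1,m-1)}+u''A$. Let $c_2=vG_{\RM(r-1,m-1)}\in\mC_{I_2}$. Since $\Tilde G_{\RM(r,m)}$ generates $\RM(r,m)$, we may define
\[
\sigma(c_1,c_2):=(u,v)\,\Tilde G_{\RM(r,m)}=\bigl(c_1,\; u''A+c_2\bigr)\in\F_2^{2^m}.
\]
This map is linear and bijective from $\mC_{\textbf I}$ onto $\mC_F$, because $\Tilde G_{\RM(r,m)}$ has full rank and dimensions add up by Theorem~\ref{thm:RMdim}(i) and~(iii). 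The first half of the final codeword is $c_1$ itself, so $|\mU_1|=n_{I_1}$.

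For the second half, let $Z\subseteq[2^{m-1}]$ be the set of zero columns of $A$. By Remark~\ref{rem:rmsplit}, $Z$ is exactly the set of points of $\mP(m-1)$ of weight at most $r-1$, and $|Z|=\sum_{i=0}^{r-1}\binom{m-1}{i}=k_{I_2}$. On $Z$ the second half equals $(c_2)_Z$, so these coordinates are unchanged from $c_2$. The remaining $n_{I_2}-k_{I_2}$ coordinates are new, which gives $|\mU_2|=k_{I_2}$. One should check that no new coordinate accidentally equals a symbol of the initial codewords for every codeword. If it did, that coordinate would only be reclassified as unchanged, which does not hurt the construction; but to get equality as stated we simply declare these coordinates new, as the statement counts.

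To compute a new coordinate $j\notin Z$ we need $(u''A)_j$ and $(c_2)_j$.

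For the first term, read an information set of $\RM(r,m-1)$ in $c_1$, of size $k_{I_1}$. This determines $u$, hence $u''A$, so $|\mR_1|\le k_{I_1}$.

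For $c_2$ there are two options:
\begin{itemize}
\item read $(c_2)_j$ directly for all $j\notin Z$, which costs $n_{I_2}-k_{I_2}$ reads;
\item read $(c_2)_Z$ and recover all of $c_2$, which costs $k_{I_2}$ reads.
\end{itemize}
We use whichever is smaller, giving $|\mR_2|=\min\{k_{I_2},n_{I_2}-k_{I_2}\}$, with exactly these symbols used.

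The main step to justify is the second option: the points of weight $\le r-1$ form an information set of $\RM(r-1,m-1)$. I would prove this by showing that evaluation of square-free polynomials of degree $\le r-1$ on $\{a:\wtH(a)\le r-1\}$ is injective. Order monomials $X_S$ and points $\mathbf 1_T$ by subsets. Then $X_S(\mathbf 1_T)=1$ iff $S\subseteq T$, so the evaluation matrix is unitriangular with respect to inclusion and hence invertible (equivalently, use M\"obius inversion). Since its size is $k_{I_2}$ by Theorem~\ref{thm:RMdim}(i), it is invertible, and so $c_2$ is determined by $(c_2)_Z$.
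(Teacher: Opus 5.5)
Your proposal is correct and follows the paper's own construction: use $\tilde G_{\RM(r,m)}$ from~\eqref{eq:rmm}, copy $c_1$ and the $k_{I_2}$ coordinates of $c_2$ at the zero columns of $A$, read an information set of $c_1$, and for $c_2$ take the cheaper of the two reading options. You also supply a step the paper leaves implicit, namely that the points of weight at most $r-1$ form an information set of $\RM(r-1,m-1)$, by unitriangularity of the inclusion matrix. The hedge about ``declaring'' coordinates new is not needed, and it is not really allowed, since $\sigma$ determines which symbols are unchanged; instead, note that for $j\notin Z$ the symbol $(u''A)_j+(c_2)_j$ depends nontrivially on both $u''$ (column $j$ of $A$ is nonzero) and $c_2$ (the code $\RM(r-1,m-1)$ is non-degenerate), so it equals no single initial coordinate and $|\mU_2|=k_{I_2}$ holds exactly.
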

\begin{proof}
    We use the form of the generator matrix of $\RM(r,m)$ as in~\eqref{eq:rmm} adopting also the notation used in Remark~\ref{rem:rmsplit}. The conversion procedure is defined by copying the coordinates corresponding to the first block $\RM(r,m-1)$, copying the $k_{I_2}$ coordinates of $\RM(r-1,m-1)$ corresponding to the zero columns of $A$, and computing the remaining coordinates using the displayed generator-matrix representation. More precisely, the remaining coordinates are obtained from the nonzero columns of $A$ and the corresponding coordinates of $\RM(r-1,m-1)$; hence their computation requires reading only the coordinates of $\mathcal{C}_{I_2}$ not copied unchanged, together with at most $k_{I_1}$ coordinates from $\mathcal{C}_{I_1}$. Note that we can leave all symbols from $\mC_{I_1}$ unchanged, i.e., $|\mU_1|  = n_{I_1}$. Now $A$ has $\sum_{i=0}^{r-1} \binom{m-1}{i}=k_{I_2}$ zero columns. Thus we can leave the $k_{I_2}$ coordinates corresponding to these zero columns unchanged, and we have to read $\min\{k_{I_2},n_{I_2}-k_{I_2}\}$ symbols in order to generate the remaining $n_{I_2}-k_{I_2}$ columns. In order to generate the  $n_{I_2}-k_{I_2}$ nonzero columns of $A$, we need to read at most $k_{I_1}$ symbols from $\mC_{I_1}$. 
\end{proof}

\subsection{Comparison with the obtained bounds}\label{sec:comp}

In this section, we consider the convertible code from Theorem~\ref{thm:rmc} for $m = r + 2 \ge 4$, and compare its read and write costs to the bounds in Subsections~\ref{subsec:boundsunchanged} and~\ref{subsec:boundsread}. We focus on the case $m=r+2$ because then $\mC_F^\perp=\RM(1,m)$, whose generalized
Hamming weights are explicit. This allows a closed-form comparison with the bounds from
Section~\ref{sec:gen}. To this end, we state the corresponding bounds from Corollaries~\ref{cor:unchanged1},~\ref{cor:unchanged2},~\ref{cor:readloweru}, and \cref{thm:readlowergen}, and derive the conditions under which these bounds are met with equality.
\begin{align*}
        &n_{I_1}=n_{I_2}=2^{m-1}, \quad k_{I_1} = \sum_{i=0}^{r} \binom{m-1}{i}=2^{m-1}-1,\\
        &k_{I_2} = \sum_{i=0}^{r-1} \binom{m-1}{i} = 2^{m-1}-m, \quad n_{F}=2^{m},\\
        &k_{F} = \sum_{i=0}^{r} \binom{m}{i},\quad d_F=2^{m-r}=2^2, \quad d_F^\perp = 2^{r+1}=2^{m-1}.
    \end{align*}
    The bound of Corollary~\ref{cor:unchanged1} evaluated at the above parameters and for $m=r+2$ reads as follows:
\[
\scalebox{0.85}{$
\aligned
|\mathcal{U}_1|
&\le 2^m - 2^2 - \sum_{i=0}^{m-3} \binom{m-1}{i} + 1
= 2^{m-1} + m - 3
\Longrightarrow |\mathcal{U}_1| \le n_{I_1}, \\
|\mathcal{U}_2|
&\le 2^m - 2^2 - \sum_{i=0}^{m-2} \binom{m-1}{i} + 1
= 2^{m-1} - 2
\Longrightarrow |\mathcal{U}_2| \le n_{I_2} - 2 .
\endaligned
$}
\]
Now note that $d_F^\perp = 2^{m-1} \le (2^{m-1}-1)+1 = k_{I_{1}}+1,$ so Corollary~\ref{cor:unchanged2} is not applicable to $\mC_{I_1}$. For $\mC_{I_2}$ we have $$d_F^\perp = 2^{m-1} > (2^{m-1}-m)+1 = k_{I_{2}}+1,$$ so Corollary~\ref{cor:unchanged2} applies, and we have $|\mU_2| \le k_{I_2}$. This shows that the Reed--Muller convertible code meets the bounds on the number of unchanged symbols with equality, and thus our conversion has optimal write cost.

We now compare these write-cost bounds with the generalized-Hamming-weight bound from Proposition~\ref{prop:ghw-unchanged-rewrite}. Since $\mC_F=\RM(m-2,m)$, we have $\mC_F^\perp=\RM(1,m)$. The generalized Hamming weights of $\RM(1,m)$ are given by
\begin{align*}
    d_s(\RM(1,m))=2^m-2^{m-s}, \qquad 1\le s\le m,
\end{align*}
and $d_{m+1}(\RM(1,m))=2^m$. By Wei duality (\cref{thm:wei-duality-rewrite}), the generalized Hamming weights of $\mC_F$ are therefore the integers in $$[2^m] \setminus (\{2^{m-s} +1\, : 1 \leq s \leq m\} \cup \{1\}).$$
In particular,
\begin{align*}
    d_{k_{I_2}}(\mC_F)=2^{m-1} \quad \text{and} \quad d_{k_{I_1}}(\mC_F)=2^{m-1}+m.
\end{align*}
Hence Proposition~\ref{prop:ghw-unchanged-rewrite} yields
\begin{align*}
    |\mU_1|\le 2^m-2^{m-1}=2^{m-1}=n_{I_1}
\end{align*}
and
\begin{align*}
    |\mU_2|\le 2^m-(2^{m-1}+m)=2^{m-1}-m=k_{I_2}.
\end{align*}
Thus the generalized-Hamming-weight bound does not improve the comparison for $|\mU_1|$, since the result is again truncated by the trivial bound $|\mU_1|\le n_{I_1}$, but it does sharpen the distance-only bound for $|\mU_2|$ from $2^{m-1}-2$ to $2^{m-1}-m$. This coincides with the bound obtained above from Corollary~\ref{cor:unchanged2}.

For the read cost, we first evaluate the minimum-distance bound from Corollary~\ref{cor:readloweru}. We have $\delta_1 = n_{I_1}-d_F+1=2^{m-1}-2^2+1 \ge 0$ and so $|\mR_1| \ge k_{I_1}-2^{m-1}+2^2-1=2$. Thus, for $|\mR_1|$, this bound is not met with equality by our construction.

We have $\delta_2 = k_{I_2}-d_F+1=2^{m-1}-m-2^2+1 > 0$ for $m \ge 4$ and so $|\mR_2| \ge 3$. Since $\min\{n_{I_2}-k_{I_2},k_{I_2}\}=\min\{m,2^{m-1}-m\}=m$, this bound is not met with equality for all $m\ge 4$.

The generalized-Hamming-weight bound from \cref{thm:readlowergen} gives a better comparison. Since $|\mU_1|=2^{m-1}$ and $d_{k_{I_2}}(\mC_F)=2^{m-1}$ we have  $d_{k_{I_2}}(\mC_F) < |\mU_1| +1$. Furthermore, since $k_{I_2} < k_{I_1}$ and from the strict monotonicity of the generalized Hamming weight, $|\mU_1| + 1 \leq d_{k_{I_2}+1}(\mC_F)$. By \cref{thm:readlowergen} we therefore get
$|\mR_1| \geq k_{I_1} - k_{I_2} = 2^{m-1} - 1 - (2^{m-1} -m) = m-1$.

Thus the generalized-Hamming-weight bound improves the lower bound on $|\mR_1|$ from $2$ to $m-1$. This is a substantial improvement, but our construction still has $|\mR_1|\le k_{I_1}=2^{m-1}-1$, so the gap remains large.

For $|\mR_2|$, the improvement is stronger. If $m\ge 5$, then among the integers $1,\dots,2^{m-1}-m$, exactly the elements of
\begin{align*}
    \{1\}\cup \{2^{m-s}+1 : 2\le s\le m\}
\end{align*}
are excluded from the weight hierarchy of $\mC_F$. Hence
\begin{align*}
    d_{2^{m-1}-2m}(\mC_F)=2^{m-1}-m = k_{I_2} = |\mU_2|.
\end{align*}
Therefore, letting $j :=2^{m-1}-2m=k_{I_2}-m$, by strict monotonicity of the generalized Hamming weight and the fact that $j < k_{I_2}$, we get from  \cref{thm:readlowergen} that
$$|\mR_2| \geq k_{I_2} - j = m$$

Since our construction satisfies $|\mR_2|=\min\{k_{I_2},n_{I_2}-k_{I_2}\}=m$, the generalized-Hamming-weight bound is sharp for $|\mR_2|$ whenever $m\ge 5$. In the remaining case $m=4$, a direct evaluation of \cref{thm:readlowergen} yields only $|\mR_2|\ge 3$.

The comparisons above are summarized in the following two tables. The first table collects the bounds on the number of unchanged symbols. In the last row we recall the number of unchanged symbols attained by the Reed--Muller conversion from Theorem~\ref{thm:rmc}.
\begin{table}[h!]\centering
\caption{Comparison of bounds on unchanged symbols for the Reed--Muller conversion with $m=r+2\ge 4$.}
\label{tab:rm-unchanged-comparison}
\begin{scalebox}{0.8}{
\begin{tabular}{l|l|l}
\toprule
Result used & $|\mU_1|$ & $|\mU_2|$ \\
\midrule
Corollary~\ref{cor:unchanged1} & $\le n_{I_1}$ & $\le n_{I_2}-2$ \\
Corollary~\ref{cor:unchanged2} & not applicable & $\le k_{I_2}=2^{m-1}-m$ \\
Proposition~\ref{prop:ghw-unchanged-rewrite} & $\le n_{I_1}=2^{m-1}$ & $\le k_{I_2}=2^{m-1}-m$ \\
\midrule
Construction & $n_{I_1}=2^{m-1}$ & $k_{I_2}=2^{m-1}-m$ \\
\bottomrule
\end{tabular}
}\end{scalebox}

\end{table}

The second table collects the corresponding lower bounds on read symbols. Here the last row gives the read cost of the construction from Theorem~\ref{thm:rmc}.
\begin{table}[h!]
\centering
\caption{Comparison of bounds on read symbols for the Reed--Muller conversion with $m=r+2\ge 4$.}
\label{tab:rm-read-comparison}
\begin{scalebox}{0.8}{
\begin{tabular}{l|l|l}
\toprule
Result used & $|\mR_1|$ & $|\mR_2|$ \\
\midrule
Corollary~\ref{cor:readloweru} & $\ge 2$ & $\ge 3$ \\
Corollary~\ref{cor:readlowergen} & $\ge \max\{0,5-m\}$ & $\ge \max\{0,5-m\}$ \\
 Theorem~\ref{thm:readlowergen} & $\ge m-1$ & $\ge m$ for $m\ge 5$; $\ge 3$ for $m=4$ \\
\midrule
Construction & $\le k_{I_1}=2^{m-1}-1$ & $m$ \\
\bottomrule
\end{tabular}
}\end{scalebox}
\end{table}

\begin{remark}
The above example shows that for the case where $m = r+2 \ge 4$, our construction of Reed--Muller convertible codes has optimal write cost: the bounds for $|\mU_1|$ and $|\mU_2|$ are met with equality. The generalized-Hamming-weight refinement does not change the comparison for $|\mU_1|$, but it improves the minimum distance-only bound for $|\mU_2|$ from $2^{m-1}-2$ to $2^{m-1}-m=k_{I_2}$, which is again attained by our construction. For the read cost, the generalized-Hamming-weight bounds improve the comparison more substantially: they raise the lower bound on $|\mR_1|$ from $2$ to $m-1$, and they give the sharp bound $|\mR_2|\ge m$ for all $m\ge 5$. Thus generalized Hamming weights capture the behavior of $|\mR_2|$, but the gap for $|\mR_1|$ remains large.

\end{remark}

\section{Discussion and Future Directions}\label{sec:concl}
In this paper, we studied scalar linear code conversion in the merge regime from a general perspective. We developed a linear-algebraic framework for bounding the read and write costs of conversion between arbitrary linear codes. The resulting bounds are universal in the sense that they do not rely on structural properties specific to MDS codes, locally repairable codes, or other particular code families. Instead, they provide general access-cost lower bounds that can be used as a baseline for evaluating code-conversion constructions. We further refined these bounds using generalized Hamming weights, which allows the estimates to capture finer structural properties of the final code. In particular, when the generalized Hamming weight hierarchy has nontrivial jumps, the refined bounds can improve upon bounds obtained from the minimum distance alone and recover known bounds in several special cases.

Motivated by the role of Reed--Muller codes in distributed storage systems with heterogeneous request patterns, we also introduced an explicit merge-conversion procedure for Reed--Muller codes. The construction exploits the recursive structure of Reed--Muller codes through the Plotkin construction. In the parameter regime analyzed in this paper, the construction attains the derived bounds on the write cost, showing that Reed--Muller codes can support efficient conversion while retaining their algebraic structure and their relevance to heterogeneous-request storage models. For the read cost, the generalized-Hamming-weight bounds give a sharp estimate for one part of the conversion, while a gap remains for the other part. Closing this gap, or proving that it is inherent, is a natural open problem.

Several further directions remain open. First, it would be interesting to develop universal lower bounds on the bandwidth cost of linear code conversion, possibly by extending the generalized-Hamming-weight approach to vector codes and partial downloads. Second, generalized Hamming weights may be useful for refining existing access-cost bounds for specific families of convertible codes, such as locally repairable codes, codes with availability, or other codes whose weight hierarchies are sufficiently well understood. Finally, while this paper focuses on linear conversion procedures, the study of nonlinear code conversion remains largely unexplored and may reveal conversion mechanisms that are not captured by the present linear framework.

\balance
\sloppy
\printbibliography

\end{document}